\newtheorem{lemma}{Lemma}
\newcommand{\comment}[1]{}
\begin{document}

\title{Nearest Keyword Set Search in Multi-dimensional Datasets}
\author{
\IEEEauthorblockN{Vishwakarma Singh}
\IEEEauthorblockA{Department of Computer Science\\
University of California\\
Santa Barbara, USA\\
Email: vsingh014@gmail.com}
\and
\IEEEauthorblockN{Ambuj K. Singh}
\IEEEauthorblockA{Department of Computer Science\\
University of California\\
Santa Barbara, USA\\
Email: ambuj@cs.ucsb.edu}
}

\maketitle
\begin{abstract}

Keyword-based search in text-rich multi-dimensional datasets facilitates many novel applications and tools. In this paper, we consider objects that are tagged with keywords and are embedded in a vector space. For these datasets, we study queries that ask for the tightest groups of points satisfying a given set of keywords. We propose a novel method called ProMiSH (Projection and Multi Scale Hashing) that uses random projection and hash-based index structures, and achieves high scalability and speedup. We present an exact and an approximate version of the algorithm. Our empirical studies, both on real and synthetic datasets, show that ProMiSH has a speedup of more than four orders over  state-of-the-art tree-based techniques. Our scalability tests on datasets of sizes up to $10$ million and dimensions up to $100$ for queries having up to $9$ keywords show that ProMiSH scales linearly with the dataset size, the dataset dimension, the query size, and the result size.

\end{abstract}
\section{Introduction}
\label{intro}
Objects (e.g., images, chemical compounds, or documents) are often characterized by a collection of relevant features, and are commonly represented as points in a multi-dimensional attribute space. For example, images (chemical compounds) are represented using color (molecule) feature vectors. These objects also very often have descriptive text information associated with them, e.g., images are tagged with locations. In this paper, we consider multi-dimensional datasets where each data point has a set of keywords. The presence of keywords allows for the development of new tools for querying and exploring these multi-dimensional datasets.

In this paper, we study \textit{nearest keyword set search} (NKS) queries on text-rich multi-dimensional datasets. An NKS query is a set of user provided keywords. The top-$1$ result of an NKS query is a set of data points which contains all the query keywords and the points form the tightest cluster in the multi-dimensional space. Figure~\ref{fig:expQuery} illustrates an NKS query. The multi-dimensional points in the dataset are represented by dots. Each point has a unique identifier and is tagged with a set of keywords. For a query $Q$=$\{a, b, c\}$, the set of points $\{7$, $8$, $9\}$ contains all the query keywords $\{a, b, c\}$ and are nearest to each other compared to any other set of points containing these query keywords. Therefore, the set $\{7$, $8$, $9\}$ is the top-$1$ result for the query $Q$.


NKS queries are useful for many applications, e.g., photo-sharing social networks, web search engines, map services\footnote{http://maps.google.com}, GIS systems\footnote{http://www.geabios.com}~\cite{MDSD:2008}, subgraph search, and for geo-tagging of objects and regions~\cite{DZhang2010}. Consider a photo-sharing social network like Facebook where photos are tagged with people names and locations. These photos can be embedded in a high-dimensional feature space of texture, color, or shape~\cite{VsinghGeoClustering, VsinghQuerySpat}. Here an NKS query can find a group of similar photos which contains a set of people. NKS searches are also useful when labeled graphs are embedded in a high dimensional space (e.g., through Lipschitz embedding~\cite{LipschitzEmbed}) for ease of processing. In this case, a search for a subgraph that has the needed labels can be answered by an NKS search in the embedded  space~\cite{HuaHai2006}. NKS queries can also reveal  geographic patterns. GIS can characterize a region by a high-dimensional set of attributes, e.g., pressure, humidity, and soil types. Additionally, these regions can also be tagged with information such as diseases. An epidemiologist can use NKS queries to discover a pattern by finding a set of similar regions which contains all the diseases of her interest. 

\begin{figure}
  \centering
  \includegraphics[width=0.6\columnwidth]{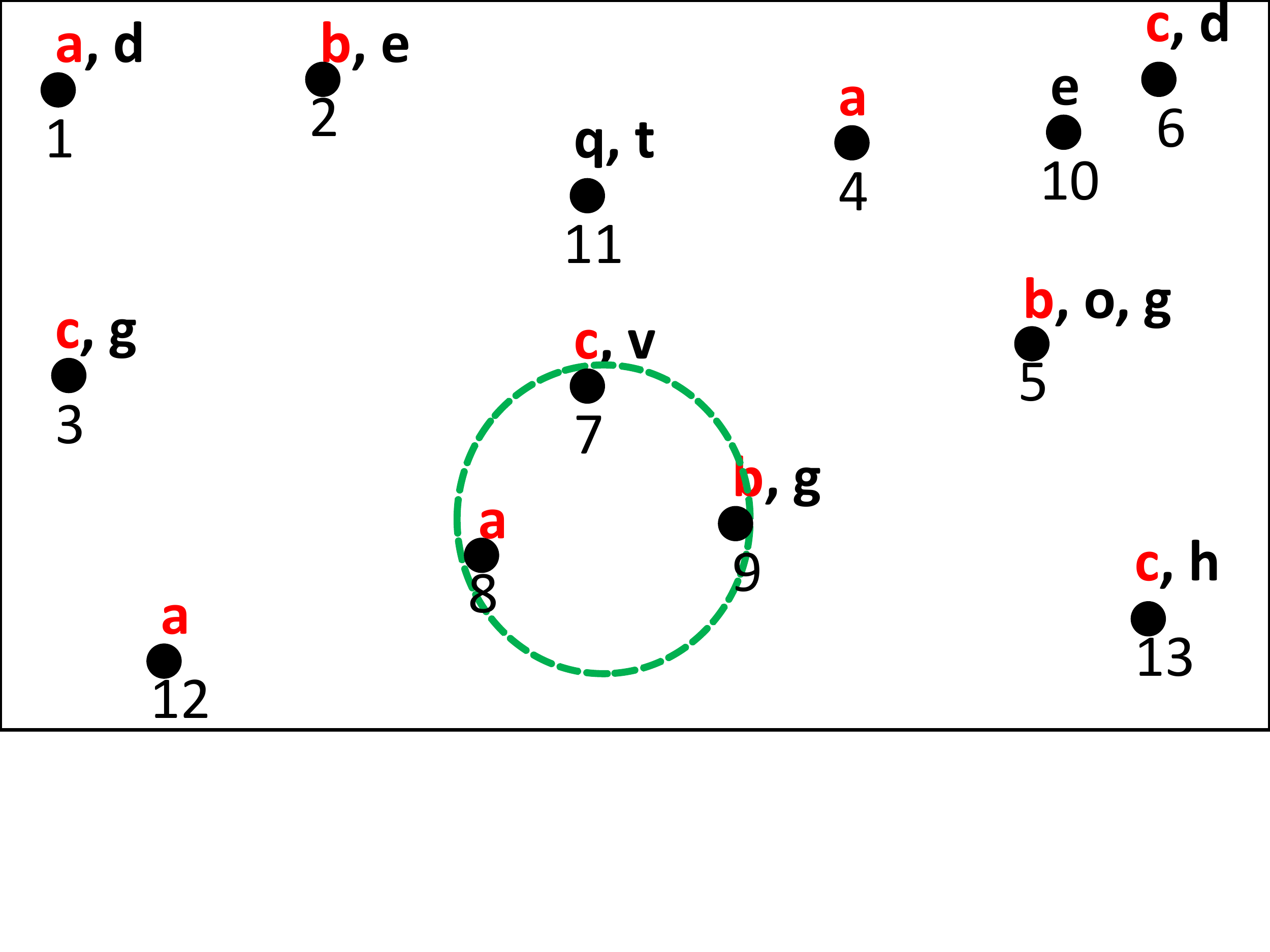}
  \vspace{-11mm}
  \caption{ An example of an NKS query on a keyword tagged multi-dimensional dataset. Query is $Q$=$\{a$, $b$, $c\}$. The top-$1$ result is the set of points $\{7$, $8$, $9\}$.}
  \vspace{-4mm}
  \label{fig:expQuery}
\end{figure}

\begin{table*}[t]
\footnotesize
\begin{center}
    \begin{tabular}{|p{4.1cm}|p{6.4cm}|p{5.9cm}|}
    \hline
    $\mathcal{D}$ : A dataset & $\mathcal{V}$ : A dictionary of unique keywords
in $\mathcal{D}$ & $Q$ : A set of keywords comprising a query\\
    \hline
    $o$ : A point in $\mathcal{D}$ & $v$ : A keyword & $N(v)$ : Number of
points in $\mathcal{D}$ having keyword $v$ \\
    \hline
    $N$ : Number of points in $\mathcal{D}$ & $U$ : Number of unique keywords in
$\mathcal{D}$ & $q$ : Number of keywords in query $Q$ \\
    \hline
    $d$ : Number of dimensions of a point & $t$ : Average number of keywords per point & $k$ :
Number of top results \\
    \hline
    $w_0$ : Initial bin-width for hashtable & $m$ : Number of unit random vectors used for projection & $L$ : Number of Hashtable-Inverted Index structures \\
    \hline
    $s$ : A scale value & $r$ : Diameter of a set of points & $z$ : A $d$-dimensional unit random vector \\
    \hline
    \end{tabular}
    \end{center}
    \vspace*{0mm}
    \caption{A glossary of notations used in the paper.}
    \vspace*{-7mm}
    \label{tab:symb}
\end{table*}

\textit{Query Definition:} Let $\mathcal{D} \subset \mathcal{R}^d$ be a $d$-dimensional dataset having $N$ points. Each point $o \in \mathcal{D}$ has a unique identifier (id). Each point is also tagged with a set of keywords $\sigma(o)$=$\{v_1, .., v_t\} \subseteq \mathcal{V}$, where $\mathcal{V}$ is a dictionary of size $U$ of all the unique keywords in $\mathcal{D}$. We use $L_2$ (Euclidean norm) to measure distance between any two points, i.e., $dist(o_i,o_j)=||o_i-o_j||_{2}$. We measure the nearness of a set of points $A$ by the maximum distance between any two points in $A$, called \textit{diameter} $r$($A$). $$r(A) = \max_{\forall {o_i,o_j \in A}} ||o_i-o_j||_{2}$$
A relatively small value of $r(A)$ implies that the corresponding objects are more similar to each other. A $q$-size NKS query $Q$=$\{v_{Q1},...,v_{Qq} \}$ has $q$ unique keywords provided by a user. Set $A \subseteq \mathcal{D} $ is a possible result, called a \textit{candidate}, of $Q$ if it contains points for all the query keywords, i.e., $ Q \subseteq \bigcup_{o \in A} \sigma(o)$, and no subset of $A$ does so. We allow overlapping candidates. If $\mathcal{S}$ is the set of all candidates of $Q$, then a result of $Q$ is the candidate $A^*$ such that $$A^* = \arg\min_{A \in \mathcal{S} }r(A).$$ A top-$k$ NKS query retrieves $k$ candidates having the least diameter.  If two candidates have equal diameters, then they are further ranked by their cardinality. 

We can also measure the nearness of a set of points $A$ by a sum of all its pairwise distances $s(A)$. Here we show with an example that $s(A)$ does not yield a tighter cluster than $r(A)$. Let $A$=$\{o_1, o_2, o_3, o_4\}$  be a set of points with following pairwise distances: $\{d(o_1, o_2)$=$2$, $d(o_1, o_3)$=$1$, $d(o_2, o_3)$=$4$, $d(o_1, o_4)$=$3$, $d(o_2, o_4)$=$3$, $d(o_3, o_4)$=$8\}$. For the set $A_1$=$\{o_1, o_2, o_3\}$ we have $r(A_1)$=$4$ and $s(A_1)$=$7$ whereas for the set $A_2$=$\{o_1, o_2, o_4\}$ we have $r(A_2)$=$3$ and $s(A_2)$=$8$. Here we see that $A_1$ has a smaller diameter whereas $A_2$ has  a smaller sum of pairwise distances. In this paper we use diameter. 


A search using a tree-based index was proposed by Zhang et al.~\cite{DZhang2010,DZhang2009} to solve NKS queries on multi-dimensional datasets. The performance of this algorithm deteriorates sharply with an increase in the dimension of the dataset as the pruning techniques become ineffective. Our empirical results show that this algorithm may take hours to terminate for a high-dimensional dataset having only a few thousands points. Authors also noted that a tree-based algorithm does not scale with the dimension of the dataset. As discussed previously, NKS queries are useful for applications of varying dimensions. Therefore, there is a need for an efficient algorithm that scales linearly with the dataset dimension and yields practical query times on large datasets.


We propose ProMiSH (\textit{Projection and Multi-Scale Hashing}) to efficiently solve NKS queries. We present an exact (\textit{ProMiSH-E}) and an approximate (\textit{ProMiSH-A}) version of the algorithm. ProMiSH-E always retrieves the true top-$k$ results, and therefore has $100\%$ accuracy. ProMiSH-A is much more time and space efficient but returns results whose diameters are within a small approximation ratio of the diameters of the true results. Both algorithms scale linearly with the dataset dimension, the dataset size, the query size, and the result size. Thus, ProMiSH possesses all the three desired characteristic of a good search algorithm: 1) high quality of results (accuracy), 2) high efficiency, and 3) good scalability.

ProMiSH-E uses a set of hashtables and inverted indices to perform a localized search of the results. ProMiSH-E hashtables are inspired from Locality Sensitive Hashing (LSH)~\cite{DatarLSH:2004}, which is a state-of-the-art method for the nearest neighbor search in high-dimensional spaces. The index structure of ProMiSH-E supports accurate search, unlike LSH-based methods that allow only approximate search with probabilistic guarantees. ProMiSH-E creates hashtables at multiple bin-widths, called scales. A search in a hashtable yields subsets of points that contain query results. ProMiSH-E explores each subset using a novel pruning based strategy. An optimal strategy is NP-Hard; therefore, ProMiSH-E  uses a greedy approach. ProMiSH-A is an approximate variation of ProMiSH-E to achieve even more space and time efficiency.


We evaluated the performance of ProMiSH on both real and synthetic datasets. We used state-of-the-art Virtual bR*-Tree \cite{DZhang2010} as a reference method for comparison. The empirical results reveal that ProMiSH consistently outperforms Virtual bR*-Tree on datasets of all dimensions. The difference in performance of ProMiSH and Virtual bR*-Tree grows to more than four orders of magnitude with an increase in the dataset dimension, the dataset size, and the query size. Our scalability tests on datasets of sizes up to $10$ million and dimensions up to $100$ for queries of sizes up to $9$ show that ProMiSH scales linearly with the dataset size, the dataset dimension, the query size, and the result size. Our datasets had as many as $24,874$ unique keywords and a data point was tagged with a maximum of $14$ keywords. The space cost analysis of the algorithms show that ProMiSH-A is much more space efficient than both ProMiSH-E and Virtual bR*-Tree.

Our main contributions are: (1) a novel multi-scale index for scalable answering of NKS queries, (2) an efficient candidate generation technique from a subset of points, and (3) extensive empirical studies. 

The paper is organized as follows. A literature survey is presented in section~\ref{subsec:survey}. Index structures are described in section~\ref{indexStruct}. An exact search algorithm (ProMiSH-E) to find subsets of points containing the results is given in section~\ref{promishE}. Section~\ref{subsec:subsetSearch} discusses how answers are generated from the subsets. The approximate algorithm (ProMiSH-A) and an analysis of its approximation ratio is presented in section~\ref{ProMiSHA}. Complexity of ProMiSH is analyzed in section~\ref{sec:costAnal}. Empirical results are presented in section~\ref{exp}. We discuss extension of ProMiSH to disk in section~\ref{disk}. Finally, we provide conclusions and future work in section~\ref{conclusions}. A glossary of the notations is shown in table~\ref{tab:symb}.

\section{Literature Survey}
\label{subsec:survey}

A variety of queries, semantically different from our NKS queries, have been studied in literature on text-rich spatial datasets. Location-specific keyword queries on the web and in the GIS systems~\cite{Zhou05, sharadSKGIS07, Vaid05spatio, ChenLi:2010} were earlier answered using a combination of R-Tree~\cite{RTree84} and inverted index. Felipe et al.~\cite{FelipeSK08} developed IR$^2$-Tree to rank objects from spatial datasets based on a combination of their distances to the query locations and the relevance of their text descriptions to the query keywords.  Cong et al.~\cite{CongSKRank09} integrated R-tree and  inverted file to answer a query similar to Felipe et al.~\cite{FelipeSK08} using a different ranking function. Martins et al.~\cite{Martins:2005} computed text relevancy and location proximity independently, and then combined the two ranking scores. Cao et al.~\cite{CaoSigmod2011} recently proposed  a method to retrieve a group of spatial web objects such that the group's keywords cover the query's keywords and the objects in the group are nearest to the query location and have the lowest inter-object distances. Other keyword-based queries on spatial datasets are aggregate nearest keyword search in spatial databases~\cite{AggNKS2010}, top-$k$ preferential query~\cite{topkPrefQueries}, finding top-$k$ sites in a spatial data based on their influence on feature points~\cite{XiaTopTInfluential:2005}, and optimal location queries~\cite{ZhangOptimalLocation:2005,ZhangOptimalLocation:2008}.



Our NKS query is similar to the $m$-closest keywords query of Zhang et al.~\cite{DZhang2009}. They designed bR*-Tree based on a R*-tree~\cite{R*Tree:1990} that
also stores bitmaps and minimum bounding rectangles (MBRs) of keywords in every node along with points MBRs. The candidates are generated by the \emph{apriori} algorithm~\cite{Apriori}. They prune unwanted candidates based on the distances between MBRs of points or keywords and the best found diameter. Their pruning techniques become ineffective with an increase in the dataset dimension as there is a large overlap between MBRs due to the curse of dimensionality. This leads to an exponential number of candidates and large query times. A poor estimation of starting diameter further worsens the performance of their algorithm. bR*-Tree also suffered from a high storage cost, therefore Zhang et al. modified bR*-Tree to create Virtual bR*-Tree~\cite{DZhang2010} in memory at run time. Virtual bR*-Tree is created from a pre-stored R*-Tree  which indexes all the points, and an inverted index which stores keyword information and path from the root node in R*-Tree for each point. Both bR*-Tree and Virtual bR*-Tree, are structurally similar, and use similar candidate generation and pruning techniques. Therefore, Virtual bR*-Tree shares similar performance weaknesses as bR*-Tree.

Tree-based indices, e.g., R-Tree~\cite{RTree84} and  M-Tree~\cite{CiacciaMTRee:1997}, have been researched extensively for an efficient near neighbor search in high-dimensional spaces. These indices fail to scale to dimensions greater than $10$ because of the curse of dimensionality~\cite{WeberVAFile:1998}. VA-file~\cite{WeberVAFile:1998} and iDistance~\cite{iDistance} provide better scalability with the dataset dimension. However, the task of designing an efficient method for solving NKS queries by adapting VA-file or iDistance is not obvious.

Random projections~\cite{Lindenstrausslemma} with hashing~\cite{Kleinberg:1997, GionisLSH:1999,DatarLSH:2004, vsingh2012, keYi2009} has come to be the state-of-the-art method for an efficient near neighbor search in high-dimensional datasets. Datar et al.~\cite{DatarLSH:2004} used random vectors constructed from $p$-stable distributions to project points, and then computed hash keys for the points by splitting the line of projected values into disjoint bins. They concatenated hash keys obtained for a point from $m$ random vectors to create a final hash key for the point. All points were indexed into a hashtable using their hash keys. Our index structure is inspired from the same.

Multi-way distance joins of a set of multi-dimensional datasets, each of which is indexed into a R-Tree, have been studied in literature~\cite{spatialJoin2,MWSJPapadias:1999}. As discussed above, a tree-based index fails to scale with the dimension of the dataset. Further, it is not straightforward to adapt these algorithms if every query requires a multi-way distance join only on a subset of the points of each dataset.

\begin{figure}
    \centering
    \includegraphics[width=0.63\columnwidth]{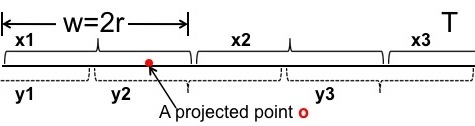}
    \vspace{-3mm}
    \caption{Division of projected values of points on a unit random vector into overlapping bins of equal width $w$=$2r$.}
    \vspace{-5mm}
    \label{divPro}
\end{figure}

\begin{figure}
  \centering
  \includegraphics[height=2.7in,width=0.98\columnwidth]{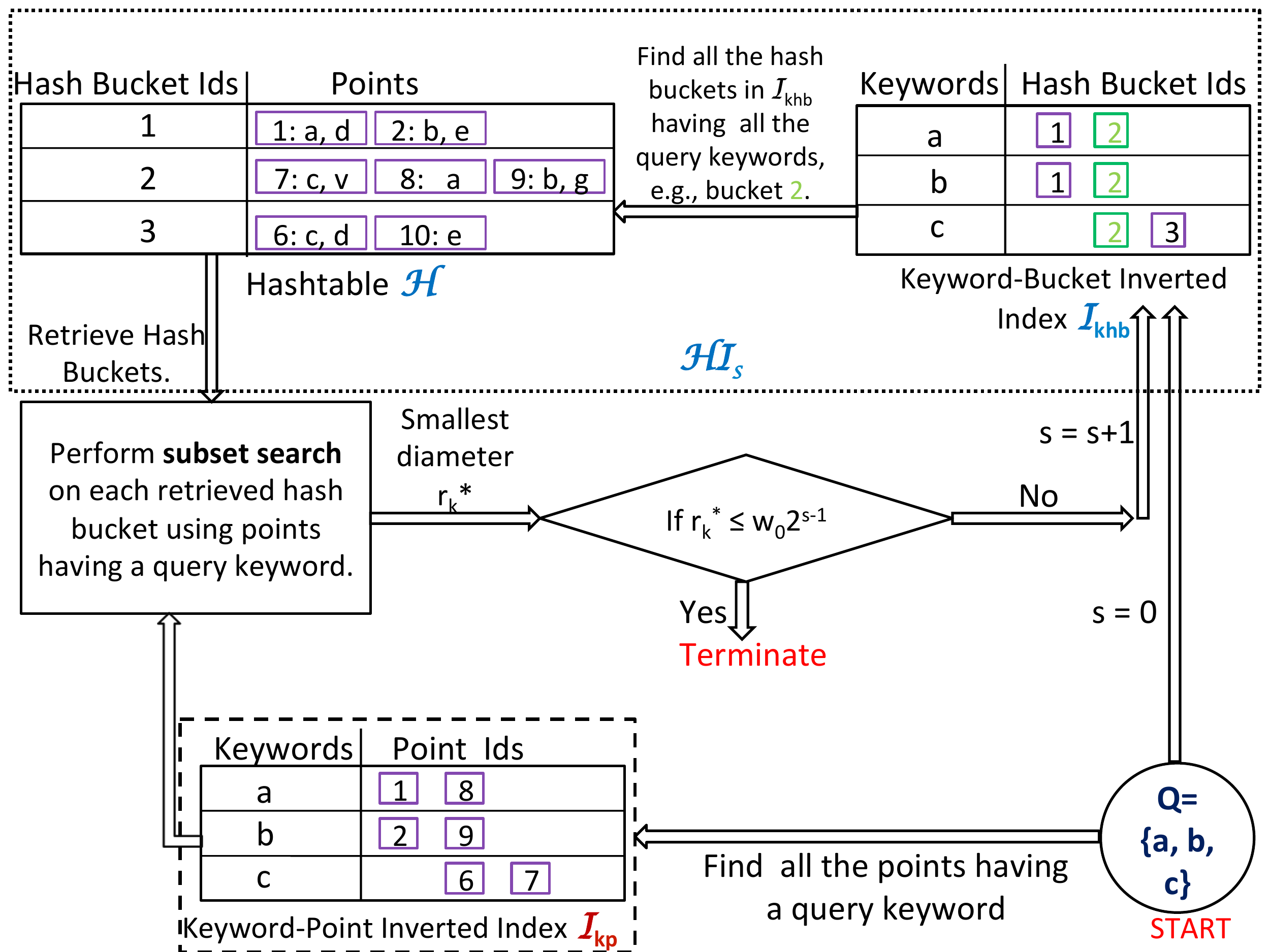}
  \vspace{-2mm}
  \caption{Index structure and flow of execution of ProMiSH.}
  \vspace{-5mm}
  \label{fig:searchAlgo}
\end{figure}

\section{Index for Exact search}
\label{indexStruct}
In this section, we describe the index structure of ProMiSH-E. It has two main data structures. The first data structure is a keyword-point inverted index $\mathcal{I}_{kp}$ that indexes all the points in the dataset $\mathcal{D}$ using their keywords. $\mathcal{I}_{kp}$ is shown with a dashed rectangle in figure~\ref{fig:searchAlgo}. The second data structure consists of multiple hashtables and their corresponding inverted indices. We call a hashtable $\mathcal{H}$ together with its corresponding inverted index $\mathcal{I}_{khb}$ as a $\mathcal{HI}$ structure.

We create a hashtable $\mathcal{H}$ as follows. We randomly choose $m$ $d$-dimensional unit vectors. We compute  projection $z.o$ of each point $o$ in $\mathcal{D}$ on each unit random vector $z$. Next, we split each line of projected values into consecutive overlapping bins of width $w$ as shown in figure~\ref{divPro}. Here a bin is equally overlapped by two other bins. We assign each point $o$ a hash key based on the bin in which it lies. Since the line is split into overlapping bins, each point $o$ lies in two bins, and therefore gets two hash keys $\{b_{1}, b_{2}\}$ from each unit random  vector $z$.  For example, the line of projected values T in figure~\ref{divPro} has been split into overlapping bins $\{$x$1$, x$2$, x$3$, y$1$, y$2$, y$3\}$. Point $o$ lies in bins x$1$ and y$2$, and therefore gets two hash keys corresponding to each of the bins. We compute hash keys using equations~\ref{hashFn1} and \ref{hashFn2}: 
\begin{eqnarray}
    \label{hashFn1}
    \mathbf{h}_1(o)  &=& \lfloor \frac {z.o} {w} \rfloor \\
    \label{hashFn2}
    \mathbf{h}_2(o)  &=& \lfloor \frac {z.o - \frac {w} {2}} {w} \rfloor + C
\end{eqnarray}
where C is a constant to distinguish values of $\mathbf{h}_1$ and $\mathbf{h}_2$. A value of C can be $(max(\mathbf{h}_1)-min(\mathbf{h}_1)+2)$.

We get $m$ pairs of hash keys for each data point $o$ using $m$ unit random vectors. We take a cartesian product of these $m$ pairs of hash keys to generate $2^m$ signatures for each point $o$. A signature $sig(o)$=$\{b_{j1}, ..., b_{jm}\}$ of a point $o$ contains a hash key from each of the $m$ pairs. For example, let $z_1$ and $z_2$ be two unit random vectors for $m$=$2$. Let the hash keys of a point $o$ be $\{$x$_1$, y$_1\}$ from $z_1$ and $\{$x$_2$, y$_2\}$ from $z_2$. ProMiSH creates $2^2$ $2$-size signatures $\{$x$_1$x$_2$, x$_1$y$_2$, y$_1$x$_2$, y$_1$y$_2\}$ for $o$ by a cartesian product. 

We hash each point $o$ using each of its $2^m$ signatures as hash key into the hashtable $\mathcal{H}$. A signature $sig(o)$ of a point $o$ is converted into a hashtable bucket identifier (bucket id) using a standard hash function, e.g., $|(\sum b_{ji}*pr_i)| \% hashtable\_size$, where $pr_i$ is a random prime number. We store a point just by its id in the hash bucket.

For each hashtable $\mathcal{H}$, we create a corresponding inverted index $\mathcal{I}_{khb}$. For each bucket of $\mathcal{H}$, we compute the union of keywords of its points. Then, we index each bucket of the hashtable $\mathcal{H}$ against each of the unique keywords it contains in the inverted index $\mathcal{I}_{khb}$.

We show a $\mathcal{HI}$ structure in figure~\ref{fig:searchAlgo} with a dotted rectangle. We create $\mathcal{HI}_s$ structures for increasing bin-width $w$=$w_02^s$, where $w_0$ is initial bin-width and $s \in \{0,...,L-1\}$ is the scale. If $pMax$ is the maximum span of projected values of points on any unit random vector, then
\begin{equation}\label{eqtn:L}
L=\lceil log_2 \left( \frac {pMax}{w_0} \right) \rceil.
\end{equation}
 
\section{Exact Search (ProMiSH-E)}
\label{promishE}
Here we describe the algorithm ProMiSH-E to find subsets of points that contain the true query results. First, we introduce lemmas which guarantee that ProMiSH-E always retrieves the true top-$k$ results using the index structure. Then, we describe the steps of ProMiSH-E to find the subsets. The algorithm to find results from these subsets is described in section~\ref{subsec:subsetSearch}.
 
\begin{lemma} \label{lemmaL2} Let $\mathcal{R}^d$ be a $d$-dimensional Euclidean space.
 Let $z$ be a vector uniformly picked
from a unit $(d$-$1)$-sphere such that $z \in \mathcal{R}^d$ and $||z||_{2}=1$. For any two points $o_1$ and
$o_2$  in  $\mathcal{R}^d$,  we have $||o_1 - o_2||_{2} \geq ||z.o_1 - z.o_2||_{2}.$
\end{lemma}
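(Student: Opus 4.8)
The plan is to reduce this to a single application of the Cauchy--Schwarz inequality, observing that the projection gap is a scalar quantity and that the randomness of $z$ plays no role in the bound itself. First I would set $v = o_1 - o_2 \in \mathcal{R}^d$ and note that the projection difference factors as $z.o_1 - z.o_2 = z.(o_1 - o_2) = z.v$, which is a \emph{single real number}. Consequently its Euclidean norm is just its absolute value, so the right-hand side of the claimed inequality equals $|z.v|$, and the statement to be proved collapses to $\|v\|_2 \geq |z.v|$.

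Next I would invoke Cauchy--Schwarz directly: $|z.v| \leq \|z\|_2 \, \|v\|_2$. Since $z$ is drawn from the unit $(d$-$1)$-sphere we have $\|z\|_2 = 1$ by hypothesis, and therefore $|z.v| \leq \|v\|_2 = \|o_1 - o_2\|_2$, which is exactly the desired conclusion. Re-expanding $v$ recovers the stated form $\|o_1 - o_2\|_2 \geq \|z.o_1 - z.o_2\|_2$.

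I expect essentially no obstacle here: the result is a deterministic geometric fact saying that projecting onto any single unit direction cannot increase the distance between two points, and Cauchy--Schwarz supplies it in one step. The only point worth stating explicitly is that the inequality holds for \emph{every} unit vector $z$, not merely almost surely under the uniform distribution; the uniform sampling of $z$ from the sphere is a hypothesis that will matter for the probabilistic guarantees of the later hashing lemmas, but for this norm-contraction bound it is immaterial. The one subtlety I would flag for the reader is the double use of $\|\cdot\|_2$ in the statement --- on the right it is the norm of a one-dimensional vector, hence an absolute value --- which is precisely what makes the reduction to Cauchy--Schwarz transparent.
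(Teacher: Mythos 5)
Your proposal is correct and is essentially identical to the paper's own proof: both factor the projection gap as $|z.(o_1-o_2)|$ and apply the Cauchy--Schwarz inequality together with $\|z\|_2=1$. Your explicit remark that the right-hand norm is just an absolute value of a scalar, and that the result is deterministic for every unit $z$, is a clarification the paper leaves implicit but changes nothing in the argument.
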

\begin{proof}
Since, an Euclidean space with dot product is an inner product space, we have
\begin{eqnarray*}
  ||z.o_1 - z.o_2||_{2} &=& |z.(o_1 - o_2)| \\
     & \le & ||z||_{2} \times ||o_1 - o_2||_{2} \\
     &=& ||o_1 - o_2||_{2} \text{ since } \|z\|_{2}=1
\end{eqnarray*}
The inequality follows from Cauchy-Schwarz inequality.
\end{proof}

\begin{lemma} \label{lem:optimality}
If a set of points $A=\{o_1, ..., o_n\}$ in $\mathcal{R}^d$ with diameter $r$ is projected onto a $d$-dimensional unit random vector $z$, and the line is split into overlapping bins of equal width $w \ge 2r$, then there exists a bin containing all the points of set $A$.
\end{lemma}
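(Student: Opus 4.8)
The plan is to reduce the statement to a one‑dimensional covering fact about the overlapping bins, after first controlling how far apart the projected points can be. On the projection line the bin structure is dictated by the two hash functions: the $\mathbf{h}_1$ bins are the half‑open intervals $[kw,(k+1)w)$ aligned at multiples of $w$, while the $\mathbf{h}_2$ bins are the intervals $[(k+\tfrac{1}{2})w,(k+\tfrac{3}{2})w)$ shifted by $w/2$. Every real value thus lies in exactly one bin of each family, and consecutive bins overlap in a region of width $w/2$. My goal is to show that the entire projected image of $A$ lands inside a single such bin, which means all of $A$ shares a common hash key from $z$.

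First I would bound the spread of the projected values. Let $p_{\min}=\min_i z.o_i$ and $p_{\max}=\max_i z.o_i$, attained at points $o_a,o_b\in A$. Then $p_{\max}-p_{\min}=|z.o_a-z.o_b|$, and Lemma~\ref{lemmaL2} gives $|z.o_a-z.o_b|\le \|o_a-o_b\|_2\le r$, the last inequality because $r$ is the diameter of $A$. Hence all projected values of $A$ lie in the interval $I=[p_{\min},p_{\max}]$ whose length is at most $r$. Since the hypothesis $w\ge 2r$ is equivalent to $w/2\ge r$, the interval $I$ has length at most $w/2$.

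The heart of the argument — the step I expect to carry the real content — is the covering property: any interval of length at most $w/2$ is contained entirely in one bin. I would prove this by a short case analysis on the position of the left endpoint $p_{\min}$ within its $\mathbf{h}_1$ bin $[kw,(k+1)w)$. If $p_{\min}$ lies in the first half, $p_{\min}\in[kw,kw+\tfrac{w}{2})$, then $p_{\max}\le p_{\min}+\tfrac{w}{2}<(k+1)w$, so $I$ sits inside the $\mathbf{h}_1$ bin $[kw,(k+1)w)$. If instead $p_{\min}$ lies in the second half, $p_{\min}\in[kw+\tfrac{w}{2},(k+1)w)$, then $p_{\min}$ belongs to the $\mathbf{h}_2$ bin $[(k+\tfrac12)w,(k+\tfrac32)w)$ and $p_{\max}\le p_{\min}+\tfrac{w}{2}<(k+\tfrac{3}{2})w$, so $I$ sits inside that $\mathbf{h}_2$ bin. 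In either case $I$, and therefore every projected point of $A$, falls in a single bin.

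Finally I would translate this back: the points of $A$ whose projections all share one bin are exactly those receiving a common hash key from $z$ through $\mathbf{h}_1$ or $\mathbf{h}_2$, so that bin contains all of $A$, as claimed. The only subtlety worth checking is the boundary behavior of the half‑open bins, but the strict inequalities above (namely $p_{\min}<kw+\tfrac{w}{2}$ in the first case and $p_{\min}<(k+1)w$ in the second) keep each containment strict at the right end; the genuinely interesting observation is simply that offsetting two bin families by $w/2$ guarantees no interval of length $\le w/2$ can straddle a boundary of both families simultaneously.
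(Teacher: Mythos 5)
Your proof is correct and follows essentially the same route as the paper: bound the projected span of $A$ by $r\le w/2$ via Lemma~\ref{lemmaL2}, then invoke the covering property of the two half-offset bin families. The only difference is that you prove that covering property explicitly by the case analysis on where $p_{\min}$ falls within its $\mathbf{h}_1$ bin, whereas the paper simply asserts it ``follows from the construction'' with a pointer to Figure~\ref{divPro}; your version is the more complete argument and also correctly handles the general case $w\ge 2r$ rather than just $w=2r$.
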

\begin{proof}
From lemma~\ref{lemmaL2} and the definition of diameter, we have $\forall o_i,\;o_j \in A,\; |z.o_i-z. o_j| \le ||o_i-o_j|| \le r$. Therefore, the span of projected values of
the points in set $A$, i.e., $\max(z.o_1, ..., z.o_n) - \min(z.o_1, ..., z.o_n)$, is  $\le
r$. Since the line is split into overlapping bins of width $2r$, it follows from the
construction, as shown in figure~\ref{divPro}, that a line segment of width $r$ is fully
contained in one of the bins. Hence, all the points in set $A$ will lie in the same bin.
\end{proof}

We illustrate here with an example how lemma~\ref{lem:optimality} guarantees retrieval of the true results. For a query $Q$, let the diameter of its top-$1$ result be $r$. We project all the data points in $\mathcal{D}$ on a unit random vector and split the projected values into overlapping bins of bin-width $2r$. Now, if we perform a search in each of the bins independently, then lemma~\ref{lem:optimality} guarantees that the top-$1$ result of query $Q$ is found in one of the bins.

A flow of execution of ProMiSH-E is shown in figure~\ref{fig:searchAlgo}. A search starts with the $\mathcal{HI}$ structure at scale $s$=$0$. ProMiSH-E finds buckets of hashtable $\mathcal{H}$, each of which contains all the query keywords, using the inverted index $\mathcal{I}_{khb}$. Then, ProMiSH-E explores each selected bucket using an efficient pruning based technique to generate results. ProMiSH-E terminates after exploring $\mathcal{HI}$ structure at the smallest scale $s$ such that the $k$th result has the diameter $r_k^* \le w_02^{s-1}$.

\begin{algorithm}[h!]
\footnotesize
\caption{ProMiSH-E}
\label{alg:ProMiSHE}
\begin{algorithmic}[1]
\REQUIRE $Q$: query keywords; $k$: number of top results
\REQUIRE $w_0$: initial bin-width
\STATE   $PQ \leftarrow [e([\;],+\infty)]$: priority queue of top-$k$ results
\STATE   $HC$: hashtable to check duplicate candidates
\STATE   $BS$ : bitset to track points having a query keyword
\FORALL  {$o \in \cup_{\forall v_Q \in Q} \mathcal{I}_{kp}[v_Q]$}
\STATE   $BS[o] \leftarrow$ true /* Find points having query keywords*/
\ENDFOR
\FORALL  {$s \in \{0, ..., L-1\}$}
\STATE   Get $\mathcal{HI}$ at $s$
\STATE   $E[\;] \leftarrow 0$ /* List of hash buckets */
\FORALL  {$v_Q \in Q$}
\FORALL  {$bId \in \mathcal{I}_{khb}[v_Q]$}
\STATE   $E[bId] \leftarrow E[bId]+1$
\ENDFOR
\ENDFOR
\FORALL {$i \in (0, ..., SizeOf(E)) $ }
\IF     {$E[i]= SizeOf(Q)$}
\STATE  $F' \leftarrow \emptyset$ /* Obtain a subset of points */
\FORALL {$o \in \mathcal{H}[i]$}
\IF     {$BS[o]=$ true}
\STATE  $F' \leftarrow F' \cup o$
\ENDIF
\ENDFOR
\IF   {checkDuplicateCand($F',\; HC$) = false}
\STATE searchInSubset($F'$, $PQ$)
\ENDIF
\ENDIF
\ENDFOR
\STATE /* Check termination condition */
\IF {$PQ[k].r \le w_02^{s-1}$}
\STATE Return $PQ$
\ENDIF
\ENDFOR
\STATE /* Perform search on $\mathcal{D}$ if algorithm has not terminated */
\FORALL {$o \in \mathcal{D}$}
\IF     {$BS[o]=$ true}
\STATE  $F' \leftarrow F' \cup o$
\ENDIF
\ENDFOR
\STATE searchInSubset($F'$, $PQ$)
\STATE Return $PQ$
\end{algorithmic}
\end{algorithm}

Algorithm~\ref{alg:ProMiSHE} details the steps of ProMiSH-E. It maintains a bitset $BS$. For each $v_Q \in Q $, ProMiSH-E retrieves the list of points corresponding to $v_Q$ from $\mathcal{I}_{kp}$ in step $4$. For each point $o$ in the retrieved list, ProMiSH-E marks the bit corresponding to $o$'s identifier in $BS$ as true in step $5$. Thus, ProMiSH-E finds all the points in $\mathcal{D}$ which are tagged with at least one query keyword. Next, the search continues in the $\mathcal{HI}$ structures, beginning at $s$=$0$. For any given scale $s$, ProMiSH-E accesses the $\mathcal{HI}$ structure created at the scale in step $8$. ProMiSH-E retrieves all the lists of hash bucket ids corresponding to keywords in $Q$ from the inverted index $\mathcal{I}_{khb}$ in steps ($10$-$11$). An intersection of these lists yields a set of hash buckets each of which contains all the query keywords in steps ($12$-$16$). For the example in figure~\ref{fig:searchAlgo}, this intersection yields the bucket id $2$. For each selected hash bucket, ProMiSH-E retrieves all the points in the bucket from the  hashtable $\mathcal{H}$. ProMiSH-E filters these points using bitset $BS$ to get a subset of points $F'$ in steps ($17$-$22$). Subset $F'$ contains only those points which are tagged with at least one query keyword and is explored further. 

Subset $F'$  is checked whether it has been explored earlier or not using \textit{checkDuplicateCand} (Algorithm \ref{alg:checkCand}) in step $23$. Since each point is hashed using $2^m$ signatures, duplicate subsets may be generated. If $F'$ has not been explored earlier, then ProMiSH-E performs a search on it using \textit{searchInSubset} (Algorithm \ref{alg:searchInSubset}) in step $24$. Results are inserted into a priority queue $PQ$ of size $k$. Each entry $e([\;],r)$ of $PQ$ is a tuple containing a set of points and the set's diameter. $PQ$ is initialized with $k$ entries, each of whose set is empty and the diameter is $+\infty$. Entries of $PQ$ are ordered by their diameters. Entries with equal diameters are further ordered by their set sizes. A new result is inserted into $PQ$ only if its diameter is smaller than the $k$th smallest diameter in $PQ$. If ProMiSH-E does not terminate after exploring the $\mathcal{HI}$ structure at the scale $s$, then the search proceeds to $\mathcal{HI}$ at the scale $(s+1)$.

\begin{algorithm}[t!]
\footnotesize
\caption{checkDuplicateCand}
\label{alg:checkCand}
\begin{algorithmic}[1]
\REQUIRE $F'$: a subset; $HC$: hashtable of subsets
\STATE $F'\leftarrow \; sort(F')$
\STATE $pr1$: list of prime numbers;  $pr2$: list of prime numbers;
\FORALL {$o \in F'$}
\STATE $pr_1 \leftarrow $ randomSelect$(pr1)$;  $pr_2 \leftarrow $ randomSelect$(pr2)$
\STATE $h_1 \leftarrow h_1$ +  $(o \times pr_1)$; $h_2 \leftarrow h_2$ +  $(o \times pr_2)$
\ENDFOR
\STATE $h$ $\leftarrow$ $h_1h_2$;
\IF {isEmpty$(HC[h])$=false}
\IF {elementWiseMatch($F',\; HC[h]$) = true}
\STATE Return true;
\ENDIF
\ENDIF
\STATE $HC[h]$.add$(F')$;
\STATE Return false;
\end{algorithmic}
\end{algorithm}

ProMiSH-E terminates when the $k$th smallest diameter $r_k$ in $PQ$ becomes less than or equal to half of the current bin-width $w$=$w_02^{s}$ in steps ($29$-$31$). Since $r_k \le \frac{w_02^{s}}{2}$, lemma~\ref{lem:optimality} guarantees that each true candidate is fully contained in one of the bins of the hashtable, and therefore  must have been explored. If ProMiSH-E fails to terminate after exploring $\mathcal{HI}$ at all the scale levels $s \in  \{0, ...,L-1\}$, then it performs a search in the complete dataset $\mathcal{D}$ in steps ($34$-$39$).

Algorithm \textit{checkDuplicateCand} (Algorithm~\ref{alg:checkCand}) uses a hashtable $HC$ to check duplicates for a subset $F'$. Points in $F'$ are sorted by their identifiers. Two separate standard hash functions are applied to the identifiers of the points in the sorted order to generate two hash values in steps ($2$-$6$). Both the hash values are concatenated to get a hash key $h$ for the subset  $F'$ in step $7$. The use of multiple hash functions helps to reduce hash collisions. If $HC$ already has a list of subsets at $h$, then an element-wise match of $F'$ is performed with each subset in the list in steps ($8$-$9$). Otherwise, $F'$ is stored in $HC$ using key $h$ in step $13$.

\section{Search in a Subset of Data Points}
\label{subsec:subsetSearch}
We present an algorithm for finding top-$k$ tightest clusters in a subset of points. A subset is obtained from a hashtable bucket as explained in section~\ref{promishE}. Points in the subset are grouped based on the query keywords. Then, all the promising candidates are explored by a multi-way distance join of these groups. The join uses $r_k$, the diameter of the $k$th result obtained so far by ProMiSH-E, as the distance threshold. 

We explain a multi-way distance join with an example. A multi-way distance join of $q$ groups $\{g_1, ..., g_q\}$ finds all the tuples $\{o_{1,i}, ..., o_{x,j},o_{y,k}, ...,o_{q,l}\}$ such that $\forall x,y$: $o_{x,j}\in g_x$, $o_{y,k}\in g_y,$ and $||o_{x,j}-o_{y,k}||_2 \le r_k$. Figure~\ref{fig:keywordOrderSelection}(a) shows groups $\{a$, $b$, $c\}$ of points obtained for a query $Q$=$\{a$, $b$, $c\}$ from a subset $F'$. We show an edge between a pair of points of two groups if the distance between the points is at most $r_k$, e.g, an edge between point $o_1$ in group $a$ and point $o_3$ in group $b$. A  multi-way distance join of these groups finds tuples $\{o_{1}$, $o_{3}$, $o_{9}\}$ and $\{o_{10}$, $o_{3}$, $o_{9}\}$. Each tuple obtained by a multi-way join is a promising candidate for a query.

\subsection{Group Ordering}
A suitable ordering of the groups leads to an efficient candidate exploration by a multi-way distance join. We first perform a pairwise inner joins of the groups with distance threshold $r_k$. In inner join, a pair of points from two groups are joined only if the distance between them is at most $r_k$. Figure~\ref{fig:keywordOrderSelection}(a) shows such a pairwise inner joins of the groups $\{a$, $b$, $c\}$. We see from figure~\ref{fig:keywordOrderSelection}(a) that a multi-way distance join in the order $\{a$, $b$, $c\}$ explores $2$ true candidates $\{ \{o_1$, $o_3$, $o_9\}$,  $\{o_{10}$, $o_3$, $o_9\}\}$ and a false candidate $\{o_1$, $o_4$, $o_6\}$. A multi-way distance join in the order $\{a$, $c$, $b\}$ explores the least number of candidates $2$. Therefore, a proper ordering of the groups leads to an effective pruning of false candidates. Optimal ordering of groups for the least number of candidates generation is NP-hard~\cite{joinNPComplete}.

\begin{figure}[t]
\centering
\begin{tabular}{cc}
\includegraphics[height=1in, width=0.41\columnwidth]{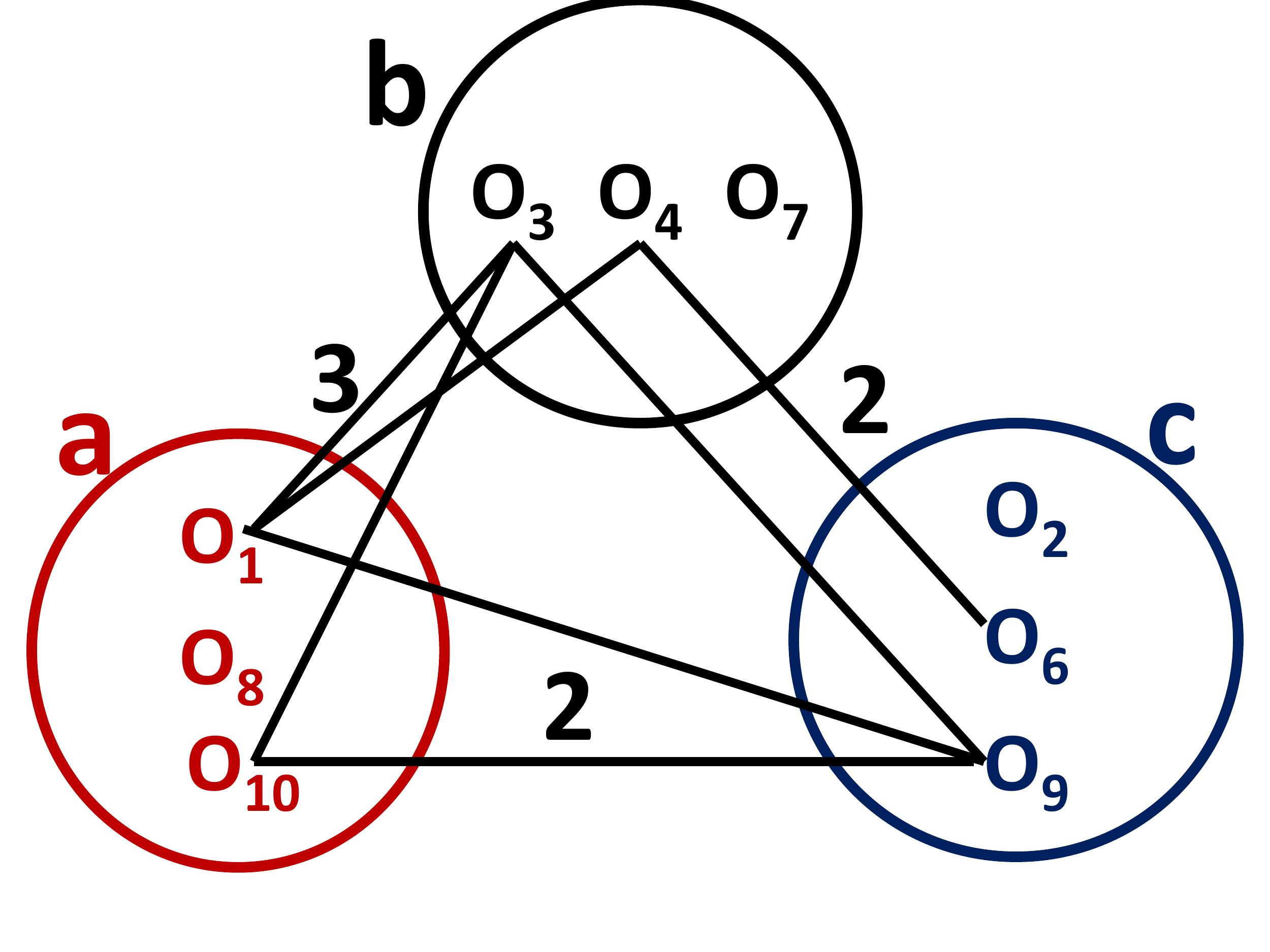} &
\includegraphics[height=0.9in, width=0.32\columnwidth]{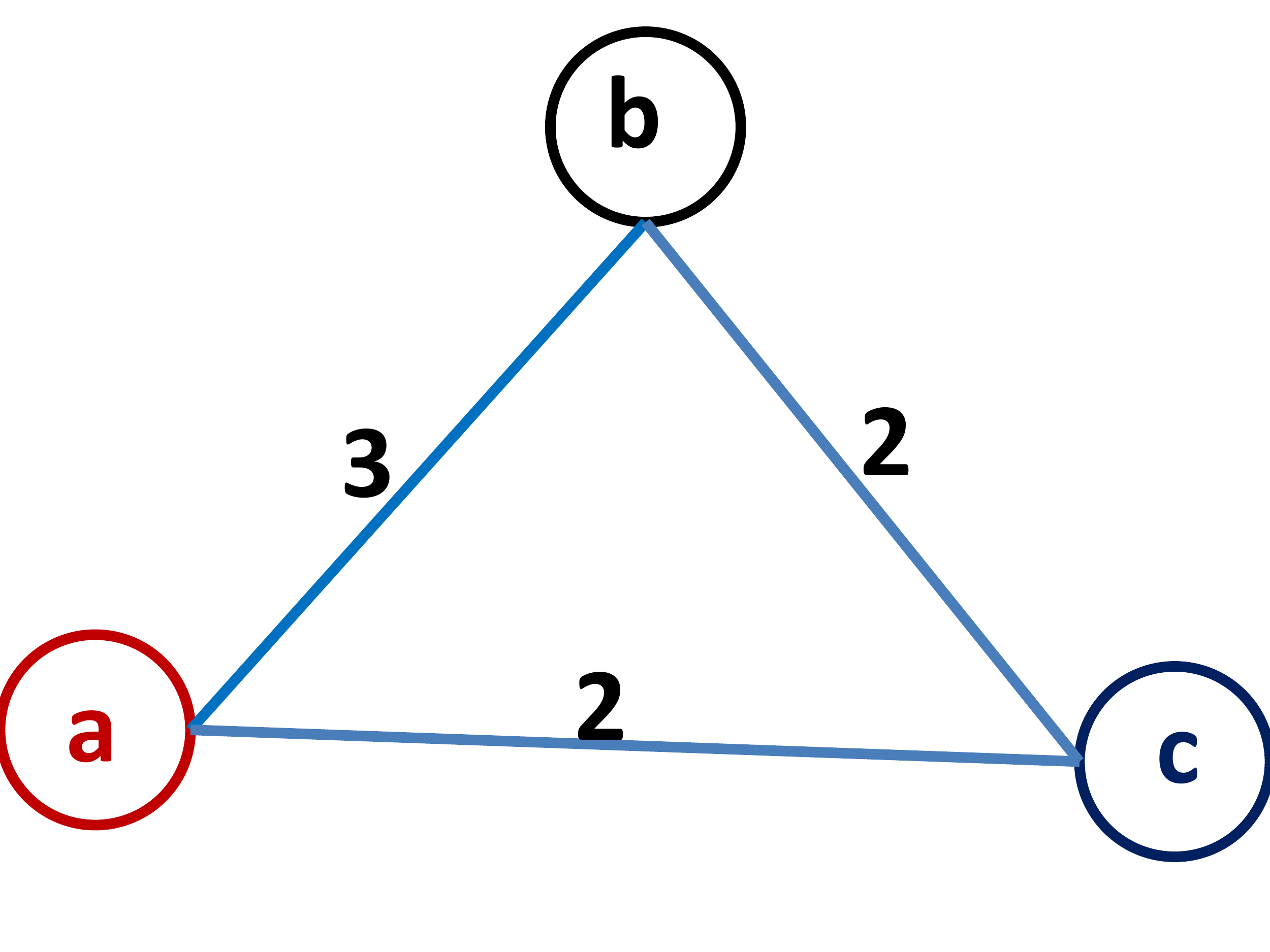}\\
(a) Pairwise inner joins  & (b) A graph representation
\end{tabular}
\vspace{-2mm}
\caption{(a) $a$, $b$, and $c$ are groups of points of a subset $F'$ obtained for a query $Q$=$\{a, b, c\}$. A point $o$ in a group $g$  is joined to a point $o'$ in another group $g'$ if $||o-o'|| \le r_k$. The groups in the order \{$a$, $c$, $b$\} generates the least number of candidates by a multi-way join. (b) A graph of pairwise inner joins. Each group is a node in the graph. The weight of an edge is the number of point pairs obtained by an inner join of the corresponding groups.}
\vspace{-5mm}
\label{fig:keywordOrderSelection}
\end{figure}

We propose a greedy approach to find the ordering of groups. We explain the algorithm with a graph in figure~\ref{fig:keywordOrderSelection}(b). Groups $\{a$, $b$, $c\}$ are nodes in the graph. The weight of an edge is the count of point pairs obtained by an inner join of the corresponding groups. The greedy method starts by selecting an edge having the least weight. If there are multiple edges with the same weight, then an edge is selected at random. Let the edge $ac$, with weight $2$, be selected in figure~\ref{fig:keywordOrderSelection}(b). This forms the ordered set $(a-c)$. The next edge to be selected is the least weight edge such that at least one of its nodes is not included in the ordered set. Edge $cb$, with weight $2$, is picked next in figure~\ref{fig:keywordOrderSelection}(b). Now the ordered set is $(a-c-b)$. This process  terminates when all the nodes are included in the set.  $(a-c-b)$ gives the ordering of the groups.

Algorithm~\ref{alg:searchInSubset} shows how the groups are ordered. The $k$th smallest diameter $r_k$ is retrieved form the priority queue $PQ$ in step $1$. For a given subset $F'$ and a query $Q$, all the points are  grouped using query keywords in steps ($2$-$5$). A pairwise inner join of the groups is performed in steps ($6$-$18$). An adjacency list $AL$ stores the distance between points which satisfy the distance threshold $r_k$. An adjacency list $M$ stores the count of point pairs obtained for each pair of groups by the inner join. A greedy algorithm finds the order of the groups in steps ($19$-$30$). It repeatedly removes an edge with the smallest weight from $M$ till all the groups are included in the order set $curOrder$. Finally, groups are sorted using $curOrder$ in step $30$.

\begin{algorithm}[!t]
\footnotesize
\caption{searchInSubset}
\label{alg:searchInSubset}
\begin{algorithmic}[1]
\REQUIRE $F'$: subset of points; $Q$: query keywords; $q$: query size
\REQUIRE $PQ$: priority queue of top-$k$ results
\STATE  $r_k \leftarrow PQ[k].r$  /* $k$th smallest diameter */
\STATE   $SL \leftarrow [(v,[\;])]$: list of lists to store groups per query keyword
\FORALL {$v \in Q $}
\STATE  $SL[v] \leftarrow \{\forall o \in F': o$ is tagged with $v\}$  /* form groups */
\ENDFOR
\STATE /* Pairwise inner joins of the groups*/
\STATE  $AL$: adjacency list to store distances between points
\STATE  $M \leftarrow 0$: adjacency list to store count of pairs between groups
\FORALL {$ (v_i, v_j) \in Q$ such that $ i \le q,\; j \le q,\; i < j $}
\FORALL {$o \in SL[v_i]$}
\FORALL {$o' \in SL[v_j]$}
\IF     {$||o-o'||_2 \le r_k$}
\STATE  $AL[o, o'] \leftarrow ||o-o'||_2$
\STATE  $M[v_i, v_j] \leftarrow M[v_i, v_j]+1$
\ENDIF
\ENDFOR
\ENDFOR
\ENDFOR
\STATE /* Order groups by a greedy approach */
\STATE $curOrder \leftarrow [\;]$
\WHILE {$Q \neq \emptyset$}
\STATE  $(v_i, v_j) \leftarrow$ removeSmallestEdge($M$)
\IF  {$v_i \not \in curOrder$}
\STATE $curOrder$.append($v_i$);  $Q \leftarrow Q \setminus v_i$
\ENDIF
\IF  {$v_j \not \in curOrder$}
\STATE $curOrder$.append($v_j$);  $Q \leftarrow Q \setminus v_j$
\ENDIF
\ENDWHILE
\STATE  sort($SL$, $curOrder$)   /* order groups */
\STATE  \textit{findCandidates}($q$, $AL$, $PQ$, $Idx$, $SL$, $curSet$, $curSetr$, $r_k$)
\end{algorithmic}
\end{algorithm}

\subsection{Nested Loops with Pruning}
We perform a multi-way distance join of the groups by nested loops. For example, consider the set of points in figure~\ref{fig:keywordOrderSelection}. Each point $o_{a,i}$ of group $a$ is checked against each point $o_{b,j}$ of group $b$ for the distance predicate, i.e.,  $||o_{a,i}-o_{b,j}||_{2} \le r_k$. If a pair ($o_{a,i}$, $o_{b,j}$) satisfies the distance predicate, then it forms a tuple of size $2$. Next, this tuple is checked against each point of group $c$. If a point $o_{c,k}$ satisfies the distance predicate with both the points $o_{a,i}$ and $o_{b, j}$, then a tuple ($o_{a,i}$, $o_{b,j}$, $o_{c, k}$) of size $3$ is generated. Each intermediate tuple generated by nested loops satisfies the property that the distance between every pair of its points is at most $r_k$. This property effectively prunes false tuples very early in the join process and helps to gain high efficiency. A candidate is found when a tuple of size $q$ is generated. If a candidate having a diameter smaller than the current value of  $r_k$ is found, then the priority queue $PQ$ and the value of $r_k$ are updated. The new value of $r_k$ is used as distance threshold for future iterations of nested loops.



We find results by nested loops as shown in Algorithm~\ref{alg:findCandidates} (\textit{findCandidates}). Nested loops are performed recursively. An intermediate tuple $curSet$ is checked against each point of group $SL$[$Idx$] in steps ($2$-$23$). First, it is determined using $AL$ whether the distance between the last point in $curSet$ and a point $o$ in $SL$[$Idx$] is at most $r_k$ in step $3$. Then, the point $o$ is checked against each point in $curSet$ for the distance predicate in steps ($5$-$15$). The diameter of $curSet$ is updated in steps ($9$-$11$). If a point $o$  satisfies the distance predicate with each point of $curSet$, then a new tuple $newCurSet$ is formed in step $17$ by appending $o$ to $curSet$. Next, a recursive call is made to \textit{findCandidates} on the next group $SL$[$Idx+1$] with $newCurSet$ and $newCurSetr$. A candidate is found if $curSet$ has a point from every group. A result is inserted into $PQ$ after checking for duplicates in steps ($26$-$33$). A duplicate check is done by a sequential match with the results in $PQ$. For a large value of $k$, a method similar to Algorithm~\ref{alg:checkCand} can be used for a duplicate check. If a new result gets inserted into $PQ$, then the value of $r_k$ is updated in step $18$.

\begin{algorithm}[!t]
\footnotesize
\caption{findCandidates}
\label{alg:findCandidates}
\begin{algorithmic}[1]
\REQUIRE $q$: query size; $SL$: list of groups
\REQUIRE $AL$: adjacency list of distances between points
\REQUIRE $PQ$: priority queue of top-$k$ results
\REQUIRE $Idx$: group index in $SL$
\REQUIRE $curSet$: an intermediate tuple
\REQUIRE $curSetr$: an intermediate tuple's diameter
\IF { $Idx \le q $}
    \FORALL  {$o \in SL[Idx]$}
       \IF  {$AL$[$curSet$[$Idx$-$1$], $o$] $\le r_k$}
           \STATE   $newCurSetr \leftarrow curSetr$
           \FORALL  {$o' \in curSet $}
            \STATE   $dist \leftarrow $ $AL$[$o$, $o'$]
            \IF      {$dist \le r_k$}
                \STATE    $flag \leftarrow$  true
                \IF       {$newCurSetr < dist$}
                \STATE     $newCurSetr \leftarrow dist$
                \ENDIF
            \ELSE
                \STATE    $flag \leftarrow $ false; break;
            \ENDIF
          \ENDFOR
        \IF {flag = true}
            \STATE  $newCurSet \leftarrow curSet$.append($o$)
            \STATE $r_k \leftarrow$ \textit{findCandidates}($q$, $AL$, $PQ$, $Idx$+1, $SL$, $newCurSet$, $newCurSetr$, $r_k$)
        \ELSE
            \STATE Continue;
        \ENDIF
      \ENDIF
    \ENDFOR
    \STATE return $r_k$
\ELSE
    \IF  {checkDuplicateAnswers($curSet$, $PQ$) = true}
       \STATE return $r_k$
    \ELSE
        \IF  {$curSetr < PQ[k].r$}
            \STATE $PQ$.\textit{Insert}([$curSet$, $curSetr$])
            \STATE return $PQ[k].r$
        \ENDIF
    \ENDIF
\ENDIF
\end{algorithmic}
\end{algorithm}

\section{Approximate Search (ProMiSH-A)}
\label{ProMiSHA}
We present ProMiSH-A that is more space and time efficient than ProMiSH-E. We also use a statistical model to show that ProMiSH-A retrieves results within a small approximation ratio of the true results with a high probability.

The index structure and the search method of ProMiSH-A are variations of ProMiSH-E, therefore we describe only the differences. The index structure of ProMiSH-A  differs from ProMiSH-E only in the way the line of projected values of points on a unit random vector is split. ProMiSH-A splits the line into non-overlapping bins of equal width, unlike ProMiSH-E which splits the line into overlapping bins. Therefore, each data point $o$ gets one hash key from a unit random vector $z$ in ProMiSH-A.  A  signature $sig(o)$ is created for each point $o$ by the concatenation of its hash keys obtained from each of the $m$ unit random vectors. Each point is hashed using its signature $sig(o)$ into a hashtable at a given scale.

The search technique of ProMiSH-A differs from ProMiSH-E in the initialization of priority queue $PQ$ and the termination condition. ProMiSH-A starts with an empty priority queue $PQ$, unlike ProMiSH-E whose priority queue is initialized with $k$ entries. ProMiSH-A checks for a termination condition after fully exploring a hashtable at a given scale. It terminates if it has $k$ entries in its priority queue $PQ$. Since each point is hashed only once into a hashtable of ProMiSH-A, it does not perform a subset duplicate check or a result duplicate check.

\textbf{Bound on approximation ratio:} Define approximation ratio $\rho \ge 1$ as the ratio of the diameter of the result reported by ProMiSH-A $r$ to the diameter of the true result $r^*$, i.e., $\rho$=$\frac{r}{r^*}$. Let $\mathcal{D}$ be a $d$-dimensional dataset and $Q$=$\{v_{Q1}, \cdots, v_{Qq}\}$ be an NKS query. Let $f_v$ be the probability mass function of the keywords $v \in \mathcal{V}$. Using $f_v$, we get the number of points tagged with a query keyword $v_Q$ as $N(v_Q)= f_v(v_Q) \times N $. Therefore, the total number of candidates for query $Q$ in $\mathcal{D}$ is
\begin{equation}\label{eqn:totalCand}
    N_n = \prod_{i=1}^{q} f_v(v_{Qi}) \times N.
\end{equation}

Let $f_r$ be the probability mass function of diameters of candidates of $Q$. Then, the total number of candidates of query $Q$ having diameter $r$ is given by
\begin{equation}\label{eqn:totalCandr}
    N_r= f_r(r) \times N_n.
\end{equation}

We project all the points in dataset $\mathcal{D}$, which contain at least one query keyword $v_Q$, onto a unit random vector $z$. We split the line of projected values into non-overlapping bins of equal width $w$. Let $Pr(A|r)$ be the conditional probability for random unit vectors that a candidate $A$ of query $Q$ having diameter $r$ is fully contained within a bin. For $m$ independent unit random vectors, the joint probability that a candidate $A$ is contained in a bin in each of the $m$ vectors is $Pr(A|r)^m$. The probability that no candidate of diameter $r$ is retrieved by ProMiSH-A from the hashtable, created using $m$ unit random vectors, is $(1-Pr(A|r)^m)^{N_r}$. Let the diameter of the top-$1$ result of query $Q$ be $r^*$. Then, the probability $P(r')$ of at least one candidate of any diameter $r$, where $r^* \le r \le r'$, being retrieved by ProMiSH-A is given by
\begin{equation}\label{eqn:approx}
    P(r') = 1-\prod_{r=r^*}^{r'} (1-Pr(A|r)^m)^{N_r}.
\end{equation}

For a given constant $\lambda$, $0 \le \lambda \le 1$, we can compute the smallest value of $r'$ using  equation~\ref{eqn:approx} such that $\lambda \le P(r')$. The value $\rho^*$=$\frac{r'}{r^*}$ gives an upper bound on the approximation ratio of the results returned by ProMiSH-A with the probability $\lambda$. 

We empirically computed $\rho^*$ for queries of size $q$=$3$ for different values of $\lambda$ using this model. We used a $32$-dimensional real dataset having $1$ million points described in section~\ref{exp} for our study. For a set of randomly chosen queries of size $3$, we computed the values of $N_r$ and $Pr(A|r)^2$. We used projections on $1$ million random vectors and a bin-width of $w$=$100$ for computing $Pr(A|r)^2$. We obtained the approximation ratio bound of $\rho^*$=$1.4$ and $\rho^*$=$1.5$ for $\lambda$=$0.8$ and $\lambda$=$0.95$ respectively. 



\section{Complexity Analysis of ProMiSH}
\label{sec:costAnal}

\begin{figure}[t]
\centering
\begin{tabular}{cc}
\includegraphics[height=0.85in,width=0.16\textwidth]{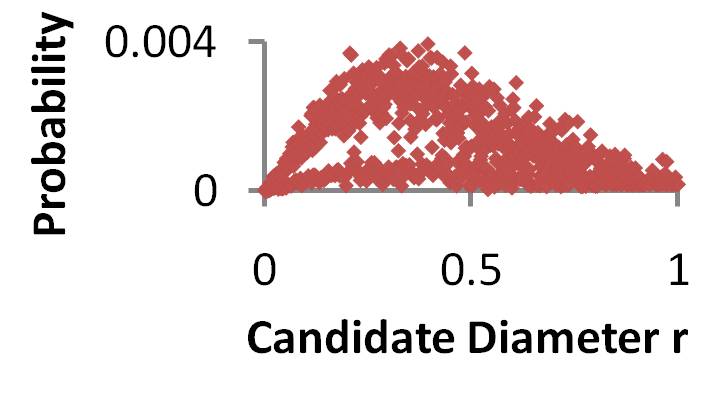} &
\includegraphics[height=0.85in,width=0.16\textwidth]{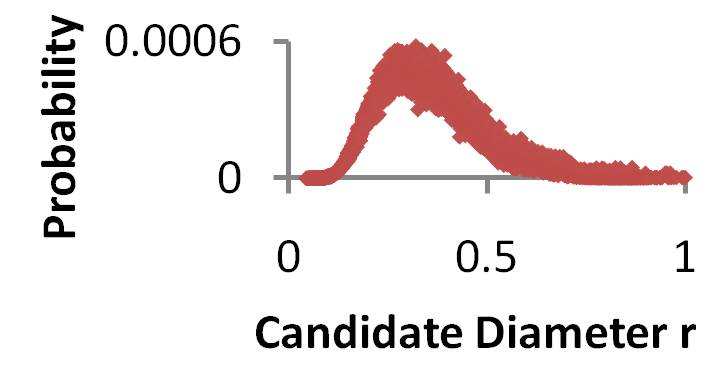}\\
(a) $d$=$2$ & (b) $d$=$16$ \\
\end{tabular}
\vspace{-2mm}
\caption{Probability mass functions $f_r$ of diameters of candidates of a query of size $3$ on a $2$-dimensional and a $16$-dimensional real datasets.}
\vspace{-1mm}
\label{fig:diaProb}
\end{figure}

\begin{figure}[t]
\centering
\begin{tabular}{cc}
\includegraphics[height=0.85in,width=0.16\textwidth]{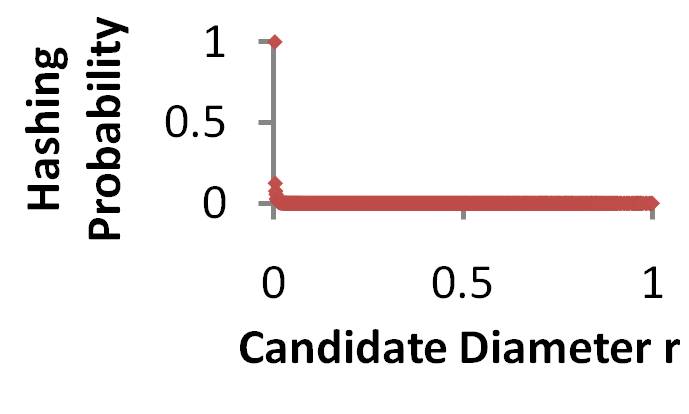} &
\includegraphics[height=0.85in,width=0.16\textwidth]{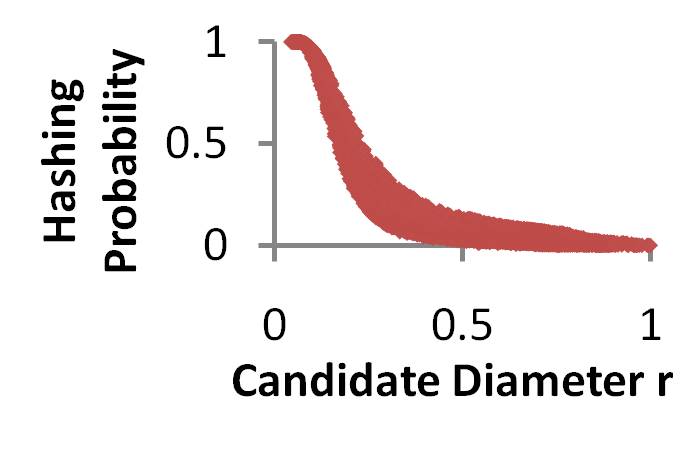}\\
(a) $d$=$2$ & (b) $d$=$16$ \\
\end{tabular}
\vspace{-2mm}
\caption{Values of $Pr(A|r)^2$ for varying diameters of candidates of a query of size $3$ on a $2$-dimensional and a $16$-dimensional real datasets.}
\vspace{0mm}
\label{fig:hashProb}
\end{figure}

\begin{table}[t]
\begin{center}
    \footnotesize
    \begin{tabular}{|c|c|c|c|c|c|}
     \hline
     Dataset Dimension $d$ & $2$ & $4$ & $8$ & $16$ & $32$ \\
     \hline
     Percentage Ratio ($\frac {N_p} {N_n}$) & 0.007 & 0.3 & 5.8 & 22 & 47 \\
     \hline
    \end{tabular}
    \end{center}
    \vspace*{0mm}
    \caption{Percentage ratio of the expected number of candidates $N_p$ to the total number of candidates $N_n$ of a query.}
    \vspace*{-8mm}
    \label{tab:canRatio}
\end{table}

We first show using a statistical model that ProMiSH effectively prunes the false candidates. Then, we analyze the time and the space complexity of ProMiSH. Let $\mathcal{D}$ be a $d$-dimensional dataset  of size $N$ where each point $o$ is tagged with $t$ keywords. Let $U$ be the number of unique keywords in $\mathcal{D}$. Let $Q$=$\{v_{Q1}, \cdots, v_{Qq}\}$ be an NKS query of size $q$.

\textbf{Statistical Model:} Let the set $A^* \subset \mathcal{D}$ with diameter $r^*$ be the top-$1$ result of query $Q$. We use  $t$=1 for our model. Let $f_v$ be the probability mass function of the keywords $v \in \mathcal{V}$. Let $f_r$ be the probability mass function of diameters of candidates of $Q$. The total number of candidates $N_n$ and $N_r$ of query $Q$ are given by equations~\ref{eqn:totalCand} and \ref{eqn:totalCandr} respectively. We select all the points in $\mathcal{D}$ which contain at least one query keyword $v_Q$. We project these points on a unit random vector $z$. We split the line of projected values into overlapping bins of equal width $w=2r^*$. Let $Pr(A|r)$ be the conditional probability for random unit vectors that a candidate $A$ of query $Q$ having diameter $r$ is fully contained within a bin. For $m$ independent unit random vectors, the joint probability that a candidate $A$ is contained in a bin in each of the $m$ vectors is $Pr(A|r)^m$. The expected number of candidates explored by ProMiSH in a hashtable, created using $m$ unit random vectors, is
\begin{equation}\label{eqn:expectedCand}
    N_p = \sum_r Pr(A|r)^m \times N_r.
\end{equation}

We empirically computed the probability mass function $f_r$, the probability $Pr(A|r)^m$, and the ratio of $N_p$ to $N_n$. We used real datasets of size $N$=$1$ million and varying dimensions for our experiments. These datasets are described in section~\ref{exp}. We used randomly selected queries of size $q$=$3$. We show probability mass functions $f_r$ of diameters of candidates of query $Q$ on datasets of dimensions $d$=$2$ and $d$=$16$ in figure~\ref{fig:diaProb}. We computed diameters of all the candidates of query $Q$ in the dataset to obtain $f_r$ and $r^*$. The diameters of the candidates were scaled to lie between $0$ and $1$. We show values of $Pr(A|r)^2$ for varying diameters of candidates of query $Q$ on datasets of dimensions $d$=$2$ and $d$=$16$ in figure~\ref{fig:hashProb}. To compute $Pr(A|r)$, we randomly chose a candidate $A$ of diameter $r$. We projected all the points of $A$ on one million unit random vectors. Then, we computed the number of vectors on each of which all the points in $A$ lie in the same bin.

We make following observations from the above analysis: (a) diameters of the candidates of a query have a heavy-tailed distribution, and (b) the value of $Pr(A|r)^m$ decreases exponentially with the diameter of the candidate of a query. The first observation implies that a large number of the candidates have diameters much larger than $r^*$. The second observation implies that the candidates with diameter larger than $r^*$ have much smaller chance of falling in a bin than $A^*$, and thus being probed by ProMiSH.  Therefore, most of the false candidates, i.e., candidates with diameters larger than $r^*$, are effectively pruned out by ProMiSH using its index.

We present the percentage ratio of $N_p$ to $N_n$ in table~\ref{tab:canRatio} for datasets of varying dimensions. Each ratio is computed as an average of $50$ random queries. We observe from table~\ref{tab:canRatio} that ProMiSH prunes more than $99\%$ of the false candidates for datasets of low dimensions, e.g., $d$=$2$. For high dimensions, e.g., $d$=$32$, more than $50\%$ of the false candidates get pruned.

\textbf{Time complexity:} We assume that the data points are uniformly distributed across all the keywords. Therefore, the total number of the data points tagged with a keyword $v$ is
\begin{equation}
N(v)= N \times (\frac {t} {U}). \nonumber
\end{equation} 

Let the index structure of ProMiSH-E be comprised of $\mathcal{HI}$ structures at $L$ scale levels where the value of $L$ is obtained by equation~\ref{eqtn:L}. Let $\mathcal{H}_s$ be the hashtable at scale $s$. We assume without any loss of generality that the hashtable $\mathcal{H}_s$ is created using $m$=$1$ unit random vector. Let $pSpan$ be the span of the projected values of the data points on the unit random vector. We assume that the data points tagged with a keyword $v$ are uniformly distributed on the line of projected values. ProMiSH-E divides the the line of projected values into overlapping bins to compute the hash keys of the points using a bin-width of $w$=$w_02^s$. Therefore, the number of the data points having keyword $v$ lying in a bucket $b$ of $\mathcal{H}_s$ is
\begin{eqnarray}
  N(vb) &=& N(v)*w/pSpan \nonumber\\
        &=&  N(v)/2^{L-s}. \nonumber
\end{eqnarray}

We first compute the cost of a search in a bucket $b$ of $\mathcal{H}_s$. The cost of pairwise inner joins for query $Q$ of size $q$ for $d$-dimensional data points is $(N(vb) \times q)^2 \times d/2$. Nested loop  enumerates the candidates by looking up the pre-computed distances between the points from the adjacency list. Therefore, the worst case cost of the nested loop is $N(vb)^q$. The total cost of a search in a bucket $b$ of the hashtable $\mathcal{H}_s$ is
\begin{equation}
T(bs) = ((N(vb) \times q)^2 \times d/2)+N(vb)^q. \nonumber
\end{equation}
The total number of buckets in  $\mathcal{H}_s$ of ProMiSH-E is $2^{L-s+1}$. Therefore, the cost of a search in $\mathcal{H}_s$ is
\begin{equation}
T(\mathcal{H}_s)= 2^{L-s+1} \times T(bs). \nonumber
\end{equation}

ProMiSH-A divides the line of projected values into non-overlap\-ping bins. The total number of buckets in  $\mathcal{H}_s$ of ProMiSH-A is $2^{L-s}$. Therefore, the cost of a search in $\mathcal{H}_s$ is
\begin{equation}
T(\mathcal{H}_s)= 2^{L-s} \times T(bs). \nonumber
\end{equation}
We present the query times of ProMiSH for NKS queries on multiple real and synthetic datasets in section~\ref{exp}.

\textbf{Space complexity:} Let the space cost of a point's identifier, a dimension of a point, and a keyword be $E$ bytes individually. The index structure of ProMiSH consists of the \textit{keyword-point} inverted index $\mathcal{I}_{kp}$ and $L$ pairs of hashtable $\mathcal{H}$ and  \textit{keyword-bucket} inverted index $\mathcal{I}_{khb}$. The space cost of $\mathcal{I}_{kp}$ is $S(\mathcal{I}_{kp})$ =($N$ $\times$ $E$ $\times$ $t$) bytes. For ProMiSH-E, each point is hashed into a hashtable  $\mathcal{H}$ using $2^m$ signatures, therefore a hashtable takes $S_E(\mathcal{H})$ =($2^m$ $\times$ $N$ $\times$ $E$) bytes. For ProMiSH-A, each point is hashed using only one signature, therefore a hashtable takes  $S_A(\mathcal{H})$ =($N$ $\times$ $E$) bytes. The space cost of a $\mathcal{I}_{khb}$ inverted index is $S(\mathcal{I}_{khb})$ = ($U$ $\times$ $M$ $\times$ $log_2M/8$) bytes, where $M$ is the number of buckets in hashtable $\mathcal{H}$. The total space cost of the index of ProMiSH-E is $S(\mathcal{I}_{kp}) + S_E(\mathcal{H}) + S(\mathcal{I}_{khb})$. The total space cost of the index of ProMiSH-A is $S(\mathcal{I}_{kp}) + S_A(\mathcal{H}) + S(\mathcal{I}_{khb})$. The ratio of index size to dataset size is further analyzed in section~\ref{sec:promishESC}.

\section{Empirical Evaluation}
\label{exp}
We evaluated the performance of ProM\-iSH-E and Pr\-oMiSH-A on synthetic and real datasets. We used recently introduced Virtual bR*-Tree~\cite{DZhang2010} as a reference method for comparison (see section~\ref{subsec:survey} for a description). We first introduce the datasets and the metrics used for measuring the performance of the algorithms. Then, we discuss the quality results of the algorithms on real datasets. Next, we describe comparative results of ProMiSH-E, ProMiSH-A, and Virtual bR*-Tree on both synthetic and real datasets. We also report scalability results of ProMiSH on both synthetic and real datasets. Finally, we present a comparison of the space usage of all the algorithms.

\begin{table}[t]
\begin{center}
    \scriptsize
    \begin{tabular}{|c|c|c|c|}
     \hline
     Id & Dataset Size ($N$) & Dictionary Size $U$ & Average $t$ \\
     \hline
     $1$ & $10$,$000$ & $5$,$661$ & $12$\\
     \hline
     $2$ & $30$,$000$ & $6$,$753$ & $13$\\
     \hline
     $3$ & $50$,$000$ & $7$,$101$ & $13$ \\
     \hline
     $4$ & $70$,$000$ & $7$,$902$ & $14$ \\
     \hline
     $5$ & $1$ Million & $24$,$874$ & $11$\\
    \hline
    \end{tabular}
    \end{center}
    \vspace*{0mm}
    \caption{Description of real datasets of five different sizes.}
    \vspace*{-5mm}
    \label{tab:realDataset}
\end{table}

\textbf{Datasets:} We used both synthetic and real datasets for experiments. Synthetic data was randomly generated. Each component of a $d$-dimensional synthetic point was chosen uniformly from [$0$-$10$,$000$]. Each synthetic point was randomly tagged with $t$ keywords. A dataset is characterized by its (1) size, $N$; (2) dimensionality, $d$; (3) dictionary size, $U$; and (4) the number of keywords associated with each point, $t$. We created various synthetic datasets by varying these parameters for our empirical studies.

Our NKS query is useful for finding tight clusters of photos which contain all the keywords provided by a user in a photo-sharing social network as discussed in section~\ref{intro}. Based on this application, we used images having descriptive tags as real datasets. We downloaded images with their textual keywords from Flickr\footnote{http://www.flickr.com/}. We transformed each image into grayscale. We created a $d$-dimensional dataset by extracting a $d$-dimensional color histogram from each image. Each data point was tagged with the keywords of its corresponding image. We describe real datasets of five different sizes used in our empirical studies in table~\ref{tab:realDataset}. The largest real dataset had $24,874$ unique keywords and each point in it was tagged with $11$ keywords. A query for a dataset was created by randomly picking a set of keywords from the dictionary of the dataset. A query is parameterized by its size $q$.


\begin{figure}
  \centering
  \includegraphics[width=0.5\columnwidth]{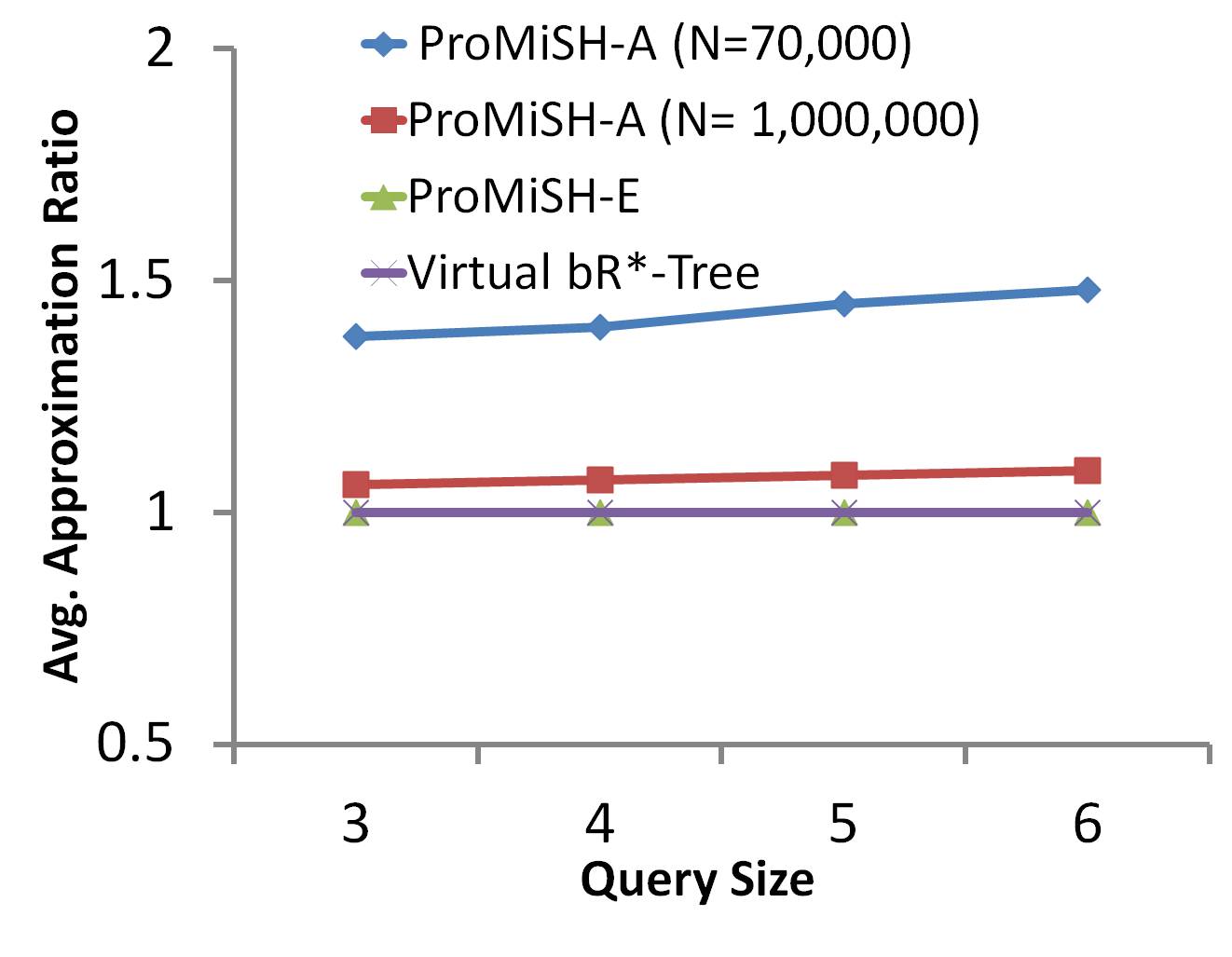}
  \vspace{-4mm}
  \caption{Average approximation ratio of ProMiSH-A for varying query sizes on $32$-dimensional real datasets of various sizes.}
  \vspace{-2mm}
  \label{fig:approx_ratio}
\end{figure}

\begin{figure}[t]
\centering
\includegraphics[width=0.45\columnwidth]{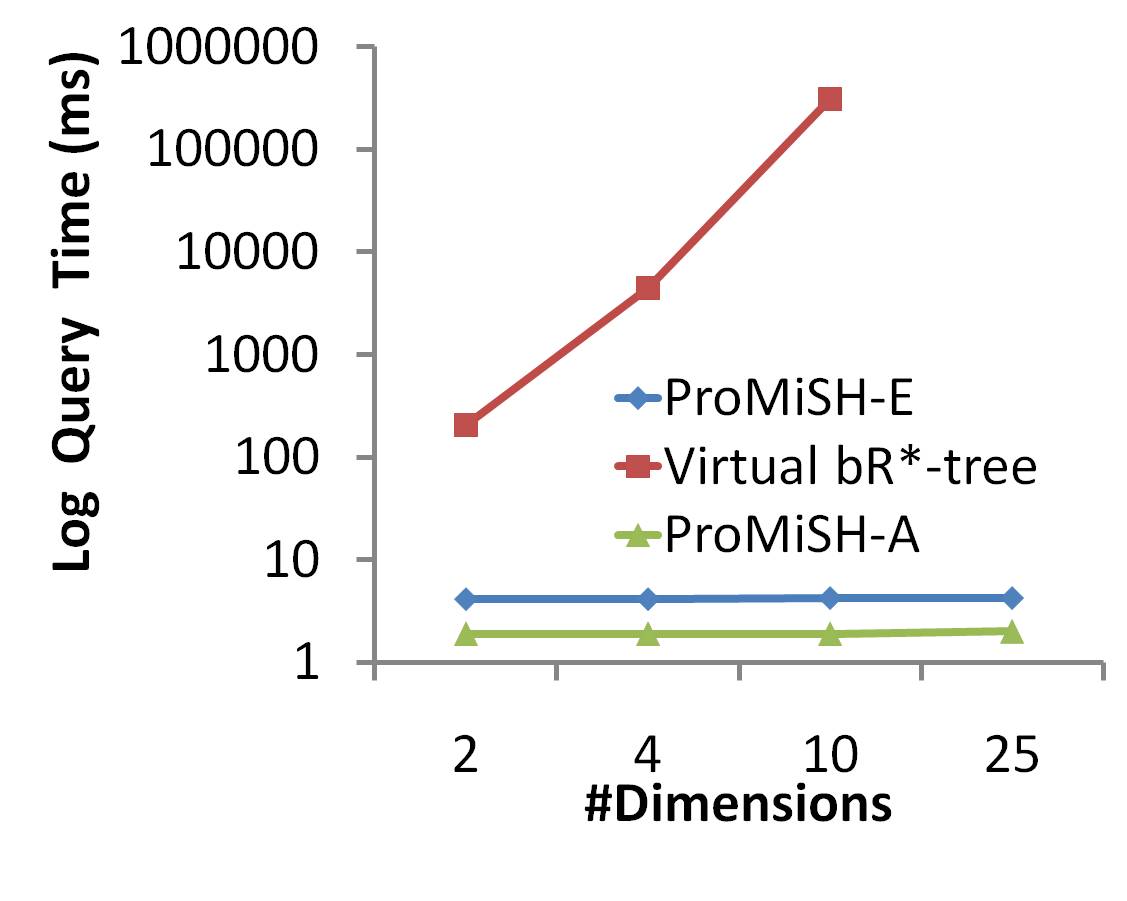}
\vspace{-4mm}
\caption{Query time comparison of algorithms for retrieving top-$1$ results for queries of size $q$=$5$ on synthetic datasets of varying dimensions $d$. Values of $N$=$100$,$000$, $t$=$1$, and $U$=$1$,$000$ were used for each dataset.}
\label{fig:syn_comp_var_d}
\vspace{-3mm}
\end{figure}



\textbf{Performance metrics:} We used \textit{approximation ratio}, \textit{query time}, and \textit{space usage} as metrics to evaluate the quality of results (accuracy), the efficiency, and the scalability of the search algorithms.

We measured the quality of results of an algorithm by its approximation ratio~\cite{GionisLSH:1999,keYi2009}. For $ 1 \le i \le k $, if $r_i$ is the $i$th diameter in top-$k$ results retrieved by an algorithm for a query $Q$ and $r_i^*$ is the true $i$th diameter, then the approximation ratio of the algorithm for top-$k$ search is given by $\rho(Q)=(\sum_{i=1}^k \frac {r_i} {r_i^*})/k$. The smaller the value of $\rho(Q)$, the better is the quality of the results returned by the algorithm. The least value of $\rho(Q)$ is $1$. We report the \textit{average approximation ratio (AAR)} for the queries of a given size, which is the mean of the approximation ratios of $50$ queries.

We validated the time efficiency of the algorithms by measuring their query times. The index structure and the dataset for each  method reside in memory. Therefore, the query time measured as the elapsed CPU time between the start and the completion of a query gives a fair comparison between the methods. A query was executed multiple times and the average execution time was taken as its query time. Finally, we report the query time for a query size $q$ as an average of $50$ different queries. The query time of a search algorithm mainly depends on the dataset size $N$, the dataset dimension $d$, and the query size $q$. Therefore, we validated the scalability of the algorithms by computing their query times for varying values of $N$, $d$, and $q$. We verified the space efficiency of an algorithm by computing the ratio of its index memory footprint to the dataset memory footprint.

\textbf{Implementation of the methods:} We implemented all the methods in Java. For Virtual bR*-Tree, we fixed the leaf node size to $1$,$000$ entries and other nodes' sizes to $100$ entries. Virtual bR*-Tree finds only the smallest subset, therefore we used $k$=$1$ for ProMiSH for a fair comparison. We used the value of $m$=$2$ and $L$=$5$ to create the index structure of ProMiSH-E and ProMiSH-A. For a dataset, if $pMax$ is the maximum span of projected values of data points on any unit random vector, then a value of $w_0$=$\frac {pMax}{2^{L}}$ was used as the initial bin-width.


All the experiments were performed on a machine having Quad-Core Intel Xeon CPU@$2$.$00$GHz, $4$,$096$ KB cache, and $98$ GB main memory and running $64$-bit Linux version $2$.$6$.

\begin{figure}[t]
\centering
\includegraphics[width=0.5\columnwidth]{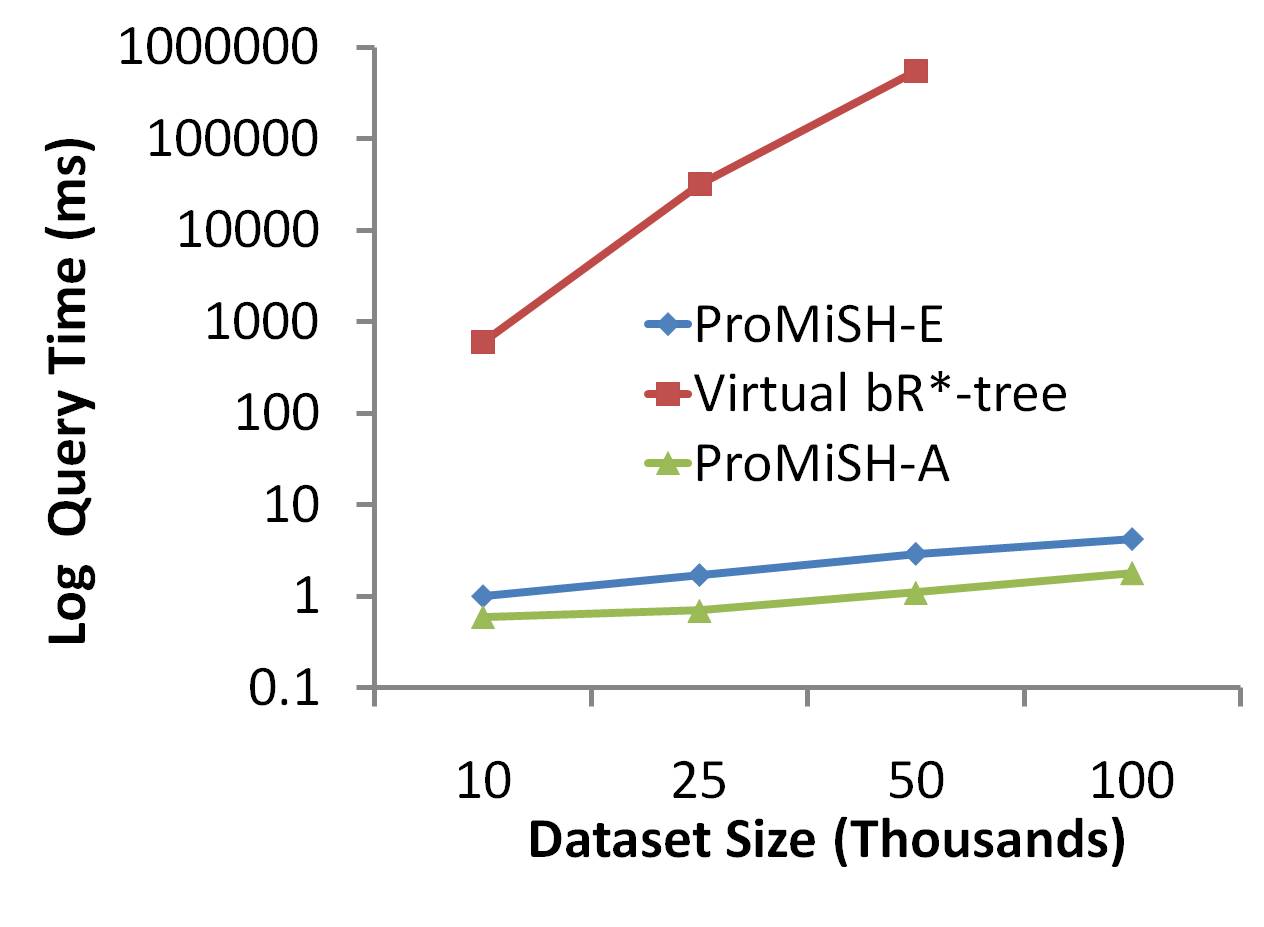}
\vspace{-4mm}
\caption{ Query time comparison of algorithms for retrieving top-$1$ results for queries of size $q$=$5$ on $25$-dimensional synthetic datasets of varying sizes $N$. Values of $t$=$1$ and $U$=$1$,$000$ were used for each dataset.}
\vspace{-2mm}
\label{fig:syn_comp_var_N}
\end{figure}

\begin{figure}[t]
\centering
\includegraphics[width=0.5\columnwidth]{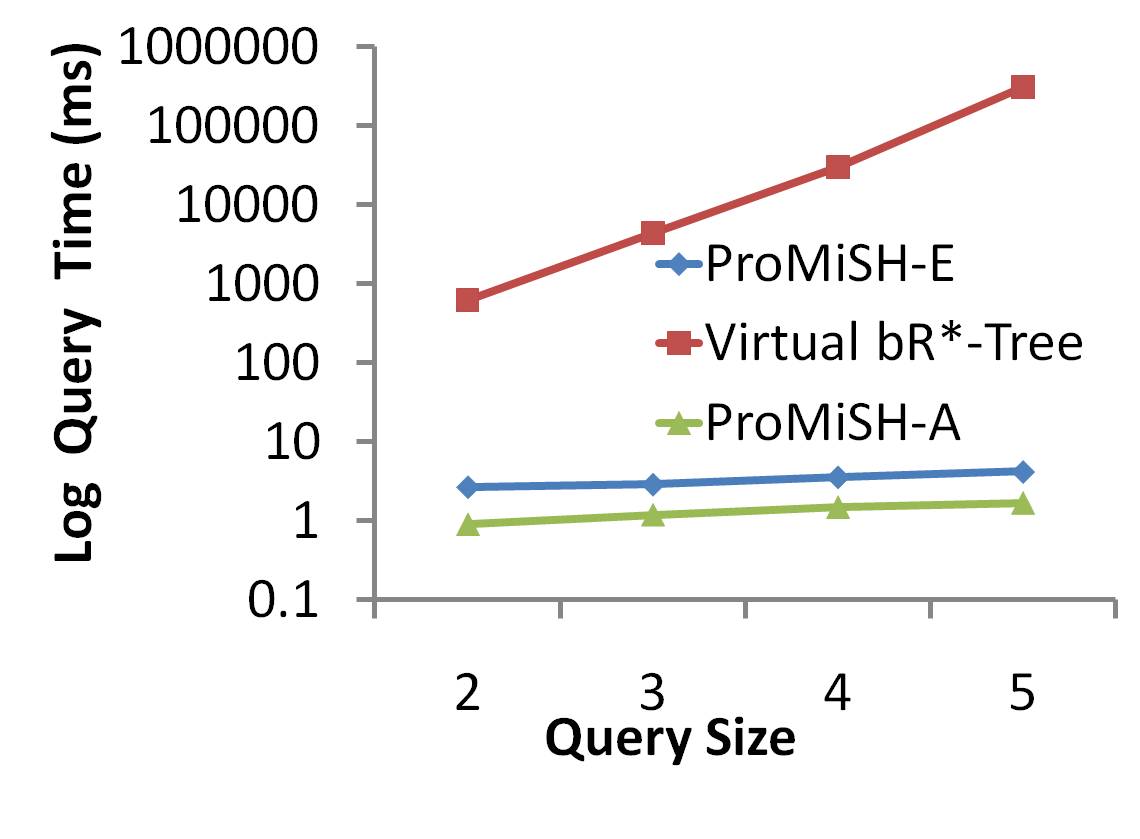}
\vspace{-4mm}
\caption{Query time comparison of algorithms for retrieving top-$1$ results for queries of varying sizes $q$ on a $10$-dimensional synthetic dataset having $100$,$000$ points. Values of $t$=$1$ and $U$=$1$,$000$ were used for the dataset.}
\vspace{-2mm}
\label{fig:syn_comp_var_q}
\end{figure}

\begin{figure}[t!]
\centering
\includegraphics[width=0.7\columnwidth]{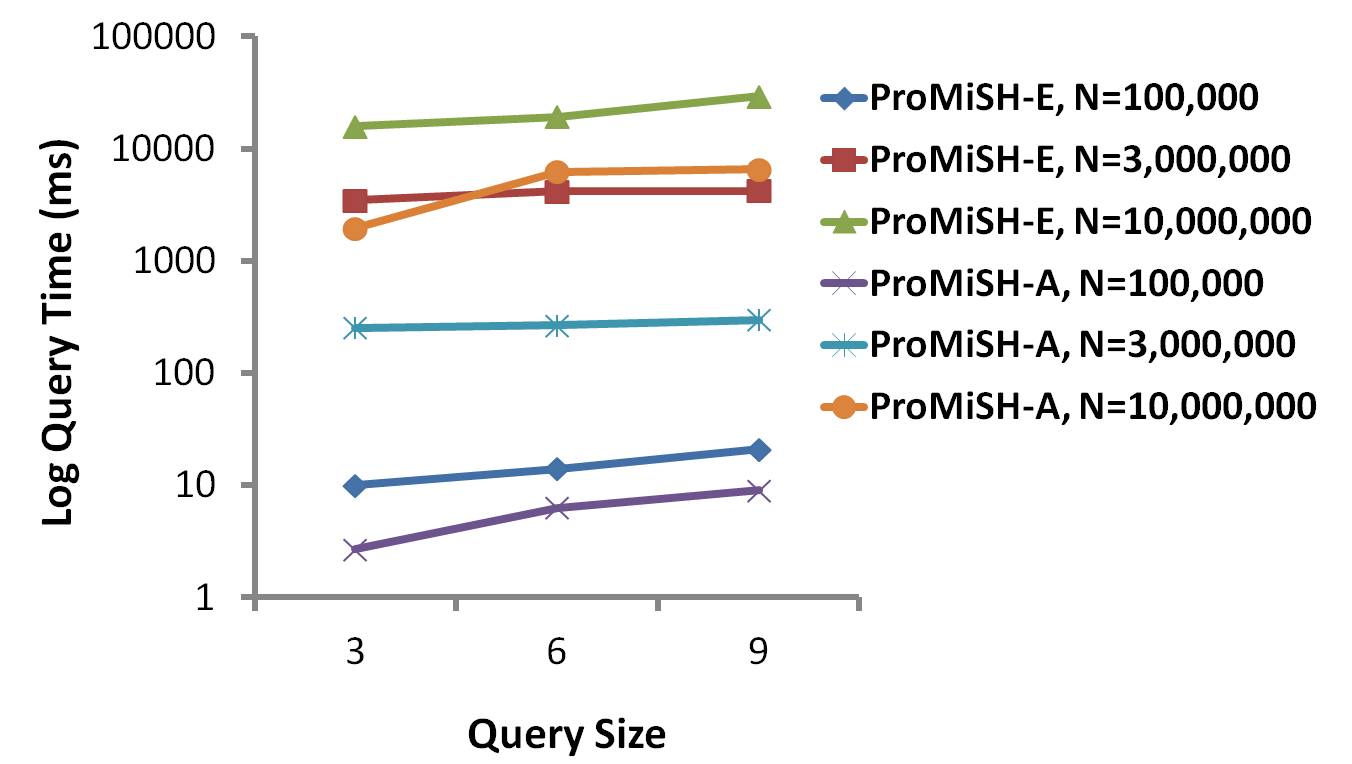}
\vspace{-4mm}
\caption{Query time analysis of ProMiSH algorithms for retrieving top-$1$ results for queries of varying sizes $q$ on $25$-dimensional synthetic datasets of varying sizes $N$. Values of $t$=$1$ and $U$=$200$ were used for each dataset.}
\vspace{-5mm}
\label{fig:syn_ea_var_q}
\end{figure}

\subsection{Quality Test}
\label{lb:quality}
We validated the result quality of ProMiSH-E, ProMiSH-A and Virtual bR*-Tree by their average approximation ratios (AAR). ProM\-iSH-E and Virtual bR*-Tree perform an exact search. Therefore, they always retrieve the true top-$k$ results, and have AAR of $1$. We used the results returned by them as the ground truth. Figure~\ref{fig:approx_ratio} shows AAR computed over top-$5$ results retrieved by ProMiSH-A for varying query sizes on two $32$-dimensional real datasets. We observe from figure~\ref{fig:approx_ratio} that AAR of ProMiSH-A is always less than $1.5$. This low AAR allows ProMiSH-A to return practically useful results with a very efficient time and space complexity.

\subsection{Efficiency on Synthetic Datasets}
We performed experiments on multiple synthetic datasets to verify the efficiency and the scalability of ProMiSH. We first discuss the comparison of query times of Virtual bR*-Tree, ProMiSH-A, and ProMiSH-E for varying dataset dimensions $d$, dataset sizes $N$, and query sizes $q$. We found that ProMiSH performs at least four orders of magnitude better than Virtual bR*-Tree. We also show results of the scalability tests of ProMiSH for varying values of $N$, $d$, $q$, and the result size $k$. Our scalability results reveal a linear performance of ProMiSH with $N$, $d$, $q$, and $k$. All the query times are measured in milliseconds (ms) and  shown in log scale in all the figures.

The query times of ProMiSH-E, ProMiSH-A, and Virtual bR*-Tree for retrieving top-$1$ results for queries of size $5$ on datasets of varying dimensions $d$ are shown in figure~\ref{fig:syn_comp_var_d}. We used a dataset of $100$,$000$ points where each point was tagged with $t$=$1$ keyword using a dictionary of size  $U$=$1$,$000$. For the dataset of dimension $25$, ProMiSH-A completed in  $1.8$ ms and ProMiSH-E took only $4.2$ ms. Conversely, results for Virtual  bR*-Tree could not be obtained since it ran for more than $5$ hours. We observed that ProMiSH not only significantly outperforms Virtual bR*-Tree on datasets of all dimensions but the difference in performance also grows to more than five orders with an increase in the dataset dimension.

\begin{figure}[t!]
\centering
\includegraphics[width=0.7\columnwidth]{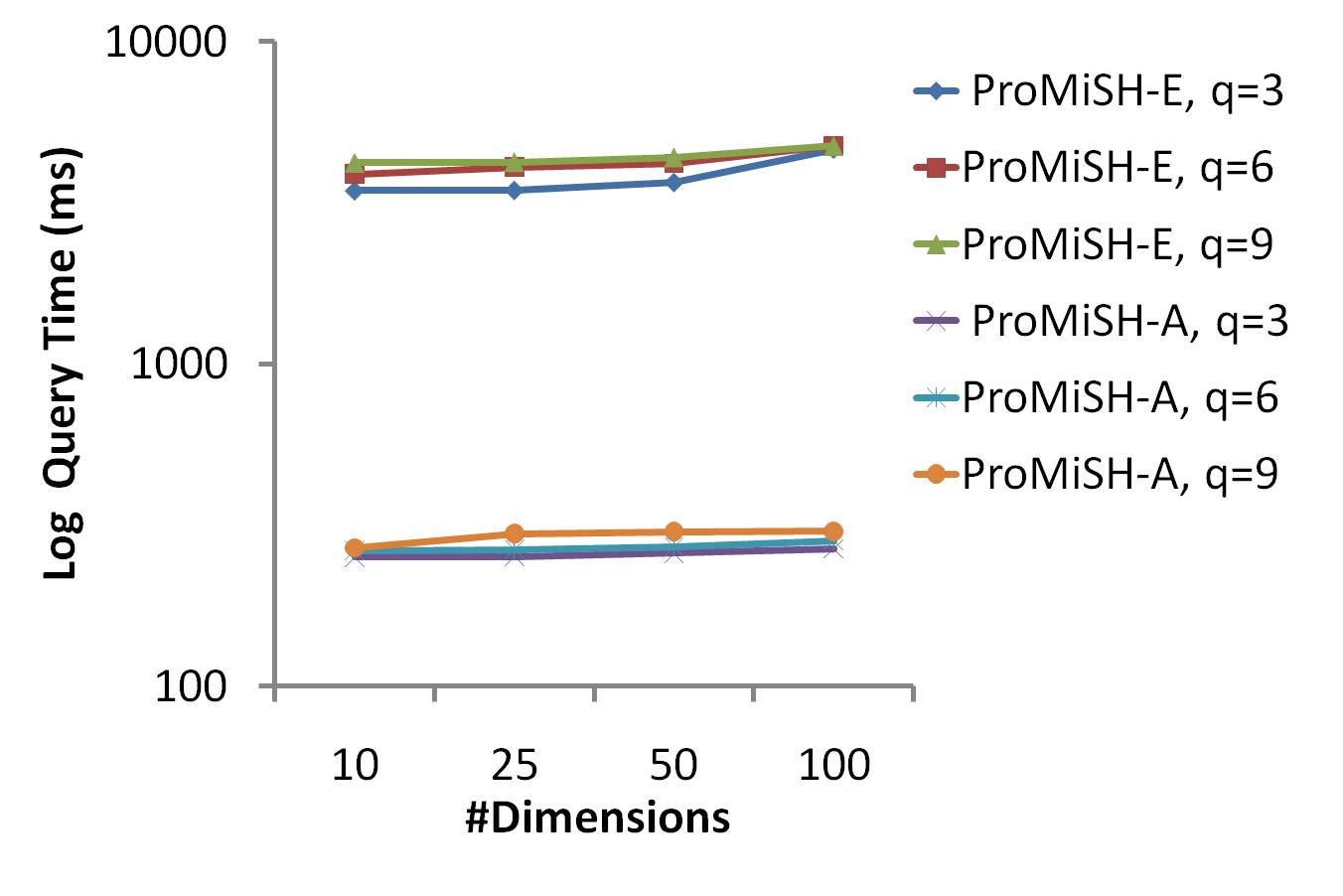}
\vspace{-5mm}
\caption{Query time analysis of ProMiSH for retrieving top-$1$ results for queries of varying sizes $q$ on large synthetic datasets of varying dimensions $d$. Values of $N$=$3$ million, $t$=$1$, and $U$=$200$ were used for each dataset.}
\vspace{-2mm}
\label{fig:syn_ea_var_d}
\end{figure}

\begin{figure}[t!]
\centering
\includegraphics[width=0.55\columnwidth]{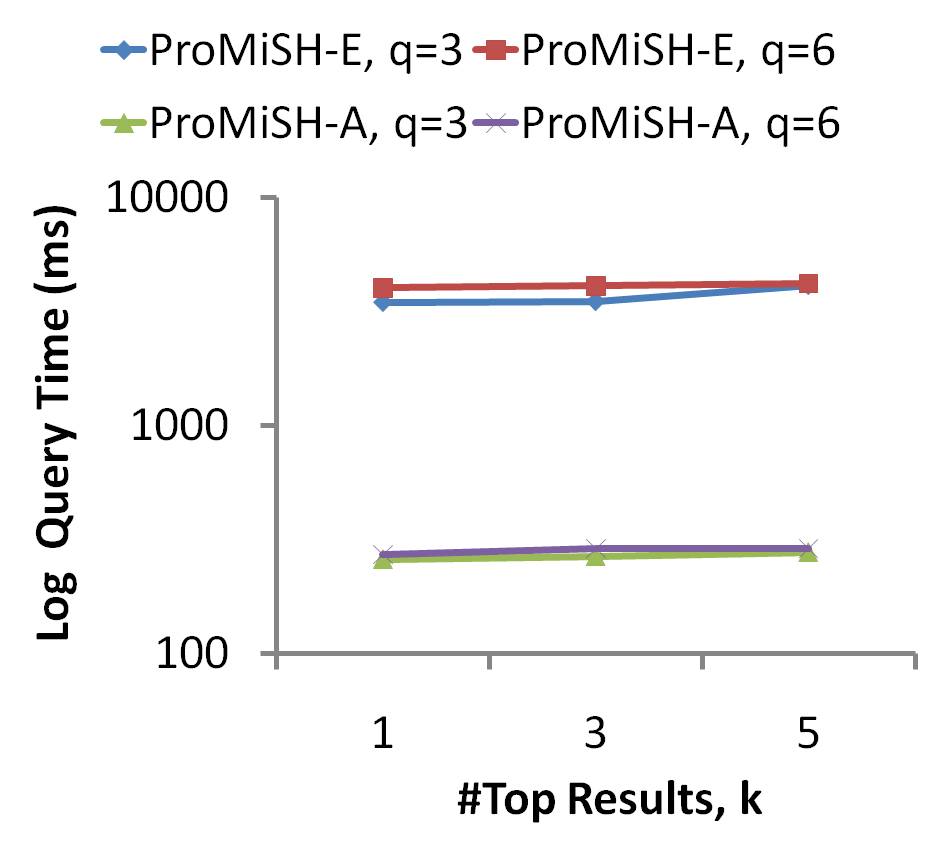}
\vspace{-4mm}
\caption{Query time analysis of ProMiSH algorithms for retrieving top-$k$ results for queries of sizes $3$ and $6$ on a $50$-dimensional synthetic dataset of size $N$=$3$ million. Values of $t$=$1$ and $U$=$200$ were used for the dataset.}
\vspace{-3mm}
\label{fig:syn_ea_var_k}
\end{figure}

\begin{figure}
\centering
\includegraphics[width=0.5\columnwidth]{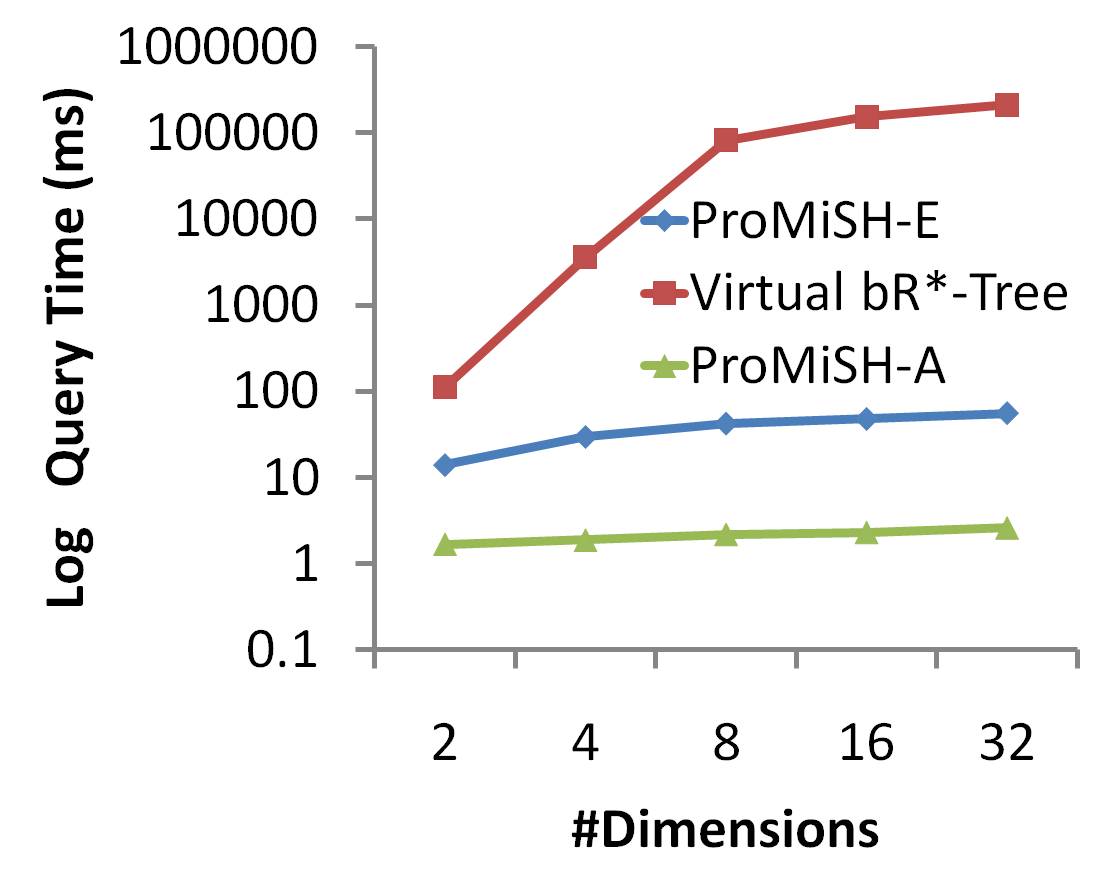}
\vspace{-4mm}
\caption{Query time comparison of algorithms for retrieving top-$1$ results for queries of size $q$=$4$ on real datasets of varying dimensions $d$ and size $N$=$50$,$000$.}
\vspace{-5mm}
\label{fig:rd_comp_var_d}
\end{figure}

\begin{figure}
\centering
\includegraphics[width=0.5\columnwidth]{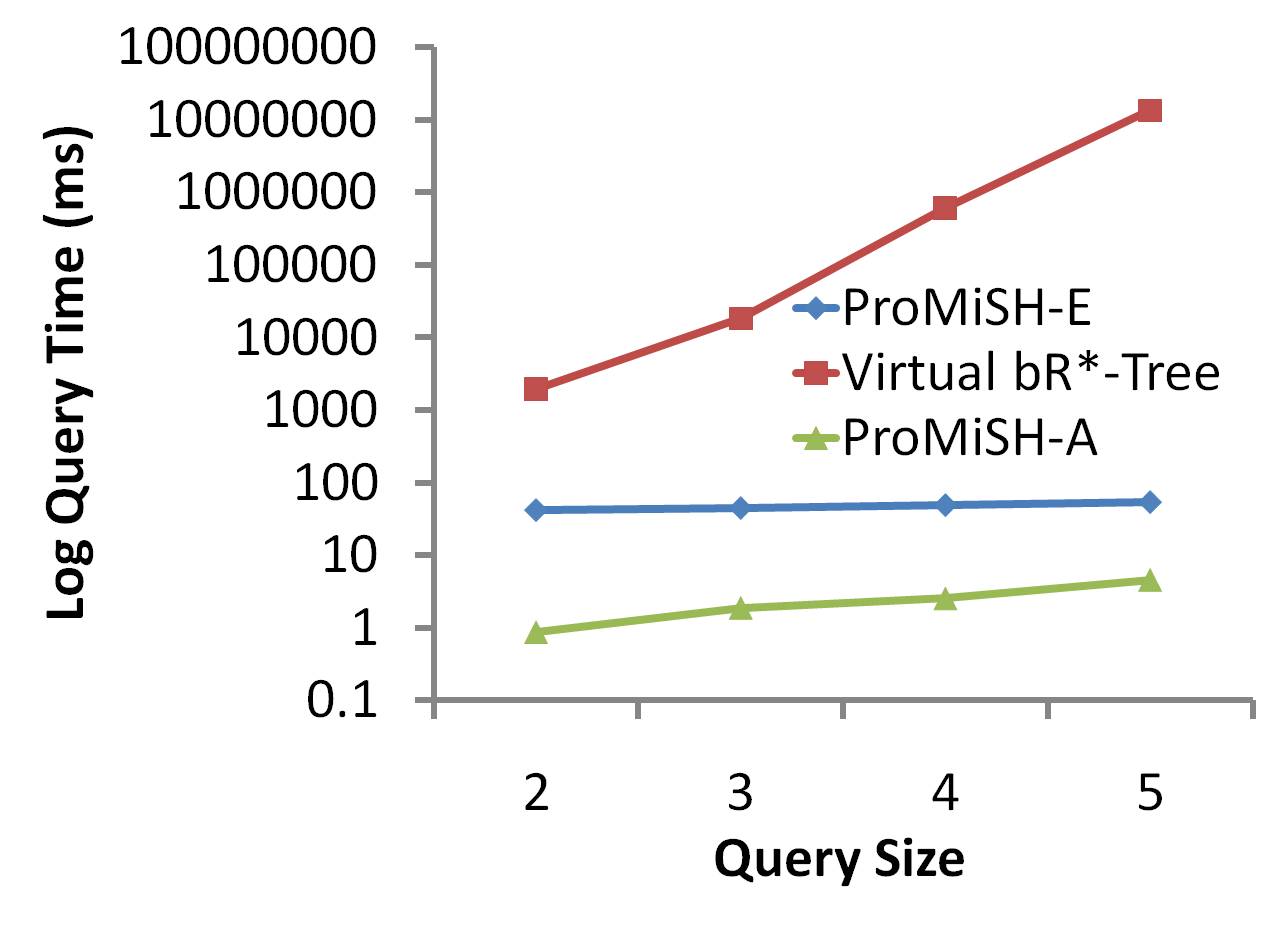}
\vspace{-4mm}
\caption{Query time comparison of algorithms for retrieving top-$1$ results for queries of varying sizes $q$ on a $16$-dimensional real dataset of size $N$=$70$,$000$.} 
\vspace{-3mm}
\label{fig:rd_comp_var_q}
\end{figure}

We show the query times of the algorithms on $25$-dime\-nsional datasets of varying sizes $N$ for queries of size $5$ in figure~\ref{fig:syn_comp_var_N}. Each dataset used a dictionary of size $U$=$1$,$000$ and $t$=$1$ keyword per point. Virtual bR*-Tree failed to finish for the dataset of size $N$=$100$,$000$ even after $5$ hours of execution. We report the query times of the algorithms for queries of varying sizes $q$ on a $10$-dimensional dataset of size $N$=$100$,$000$ in figure~\ref{fig:syn_comp_var_q}. Each data point was tagged with $t$=$1$ keyword using a dictionary of size $U$=$1$,$000$. For a query of size $5$, ProMiSH-A had a query time of $1.7$ ms, ProMiSH-E had a query time of $4.2$ ms, and Virtual bR*-Tree had a query time of $305$ seconds. We again observed that ProMiSH outperforms Virtual bR*-Tree by more than five orders of magnitude with an increase in the dataset size and the query size.

All the above results show that the query time of ProMiSH increases linearly with the dataset size $N$, the dataset dimension $d$, and the query size $q$. In contrast, Virtual bR*-Tree fails to scale with $q$, $d$, and $N$. These results confirm that the pruning criteria of Virtual bR*-Tree, as discussed in section~\ref{subsec:survey}, becomes ineffective with an increase in the dimension of the dataset. This leads to an exponential generation of potential candidates and large query times.

Next, we present scalability results of ProMiSH-E and ProMiSH-A on large synthetic datasets of varying dimensions for large query sizes and varying result sizes. Each dataset used a dictionary of size $U$=$200$. A point in each dataset was tagged with $t$=$1$ keyword. Figure~\ref{fig:syn_ea_var_q} shows the query times for queries of varying sizes $q$ on $25$-dimensional datasets of varying sizes $N$. ProMiSH-E had a query time of $29$ seconds and ProMiSH-A had a query time of $6$ seconds for queries of size $9$ on a dataset of $10$ million points. We observed that ProMiSH-A is an order of magnitude faster than ProMiSH-E for queries of all sizes. We see from figure~\ref{fig:syn_ea_var_q} that ProMiSH scales linearly with the query size and the dataset size.

\begin{figure}
\centering
\includegraphics[width=0.5\columnwidth]{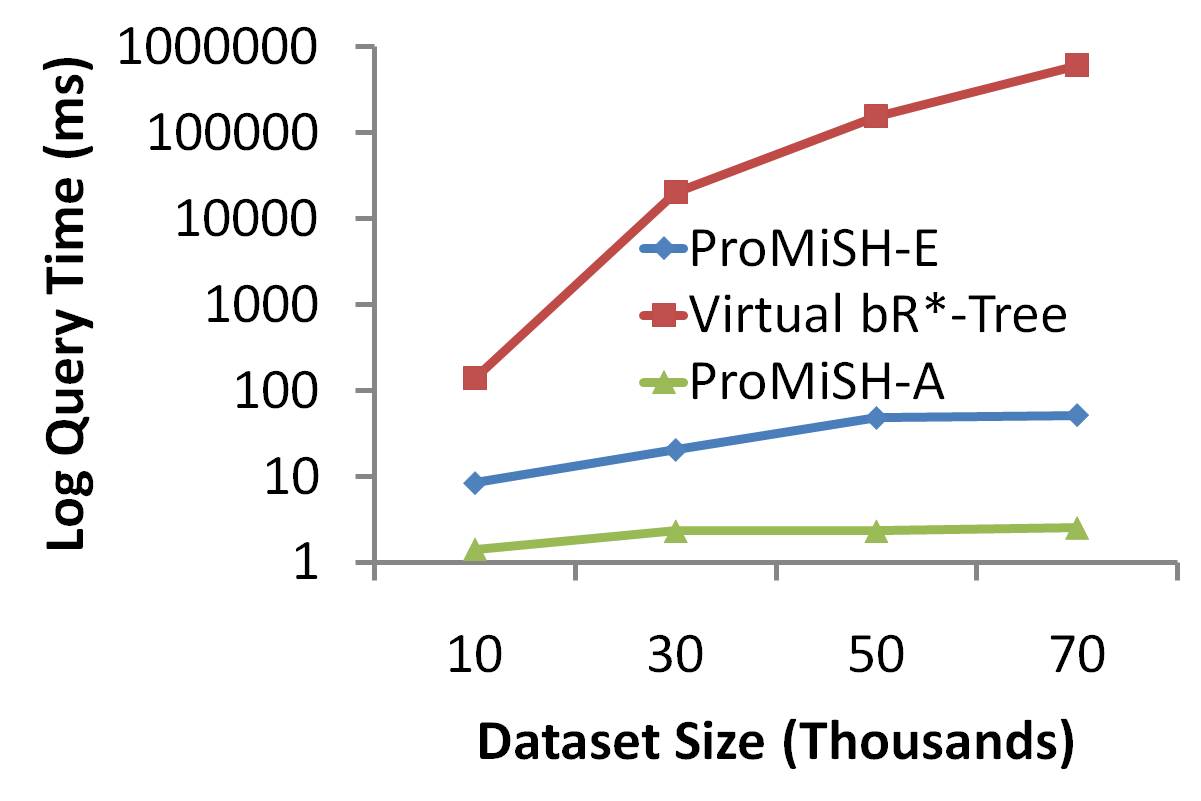}
\vspace{-4mm}
\caption{Query time comparison of algorithms for retrieving top-$1$ results for queries of size $q$=$4$ on $16$-dimensional real datasets of varying sizes $N$.}
\vspace{-3mm}
\label{fig:rd_comp_var_N}
\end{figure}

\begin{figure}
\centering
\includegraphics[width=0.52\columnwidth]{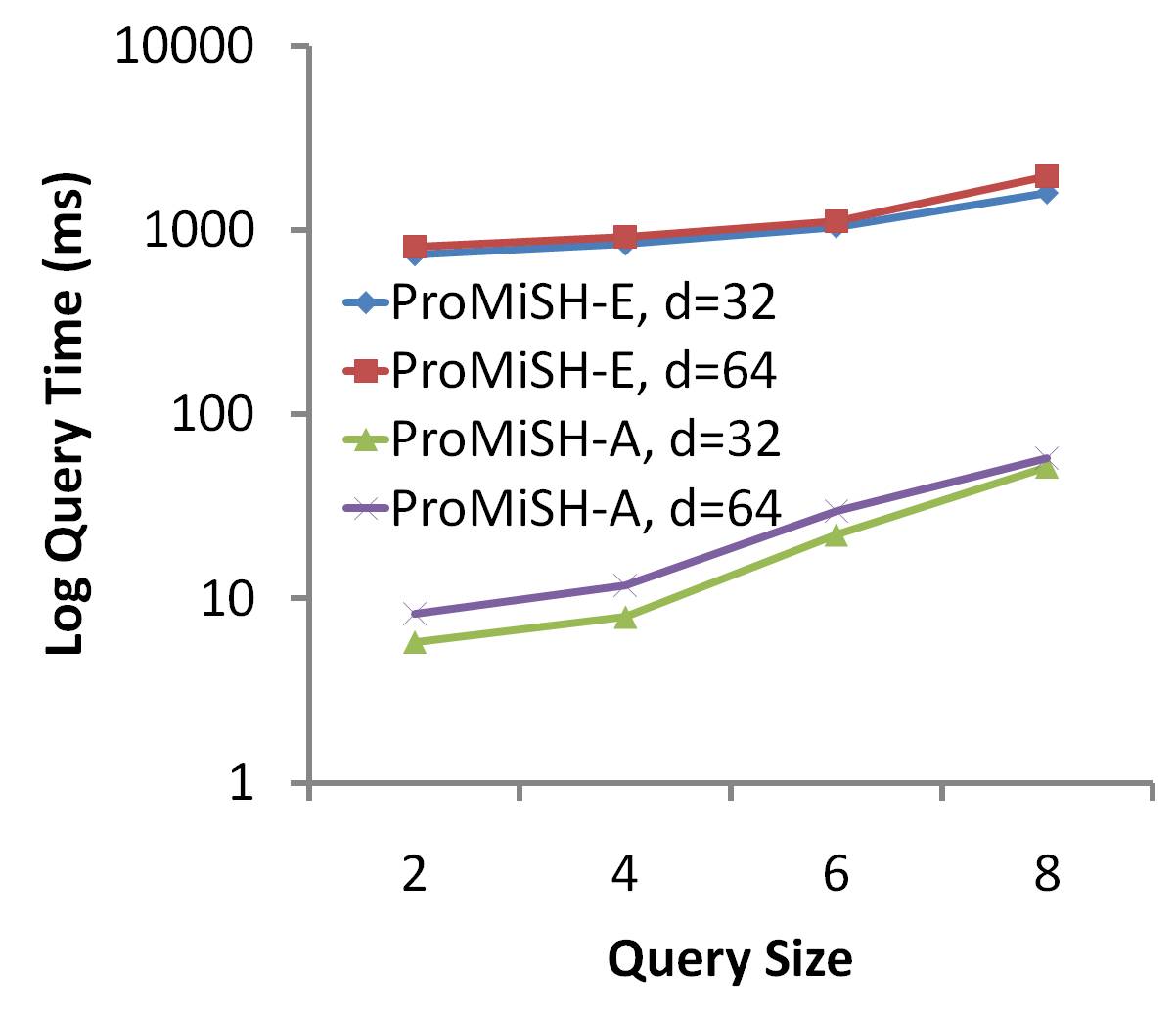}
\vspace{-4mm}
\caption{Query time analysis of ProMiSH algorithms for retrieving top-$1$ results for queries of varying sizes $q$ on real datasets of varying dimensions and size $N$=$1$ million.}
\vspace{-5mm}
\label{fig:rd_ea_var_q}
\end{figure}

Figure~\ref{fig:syn_ea_var_d} shows the query times of ProMiSH for queries of varying sizes on $3$ million size datasets of varying dimensions. ProMiSH-E had a query time of $4.7$ seconds and ProMiSH-A had a query time of $0.3$ seconds for queries of size $q$=$9$ on a $100$-dimensional dataset. ProMiSH-A is an order of magnitude faster than ProMiSH-E on datasets of all dimensions. We observed that both algorithms scale linearly with dimension $d$ of the dataset. Figure~\ref{fig:syn_ea_var_k} shows the query times for retrieving the top-$k$ results for queries of varying  sizes $q$ on a $50$-dimensional dataset. It reveals a linear performance of   both algorithms for increasing $k$. ProMiSH-A is an order of magnitude better than ProMiSH-E for any result size $k$. All these tests show that the query time of ProMiSH scales linearly with the dataset size, the dataset dimension, the query size, and the result size.

\subsection{Efficiency on Real Datasets}
We evaluated the efficiency and the scalability of ProMiSH on multiple real datasets. We first discuss query time comparisons of alternative algorithms for varying dataset dimensions $d$,  query sizes $q$, and dataset sizes $N$. We also discuss scalability tests of ProMiSH-E and ProMiSH-A for varying values of $q$, $d$, and the result size $k$. All the query times are measured in milliseconds (ms) and  shown in log scale in all the figures.

We show the query times of the algorithms on real datasets of varying dimensions $d$ in figure~\ref{fig:rd_comp_var_d}. We used datasets of size $N$=$50$,$000$ and queries of size $q$=$4$. ProMiSH-A had a query time of $3$ ms, ProMiSH-E had a query time of $55$ ms, and Virtual bR*-Tree had a query time of $210$ seconds for $32$-dimensional dataset. Comparison of query times for queries of varying sizes $q$ on a $16$-dimensional real dataset of size $N$=$70$,$000$ is shown in figure~\ref{fig:rd_comp_var_q}. ProMiSH-A had a query time of $5$ ms, ProMiSH-E had a query time of $54$ ms, and Virtual bR*-Tree had a query time of $13$,$352$ seconds for queries of size $q$=$5$. Comparison of query times on $16$-dimensional real datasets of varying sizes $N$ for queries of size $q$=$4$ is shown in figure~\ref{fig:rd_comp_var_N}. ProMiSH-A had a query time of $3$ ms, ProMiSH-E had a query time of $49$ ms, and Virtual bR*-Tree had a query time of $608$ seconds for a dataset of size $N$=$70$,$000$.

\begin{figure}
\centering
\includegraphics[width=0.52\columnwidth]{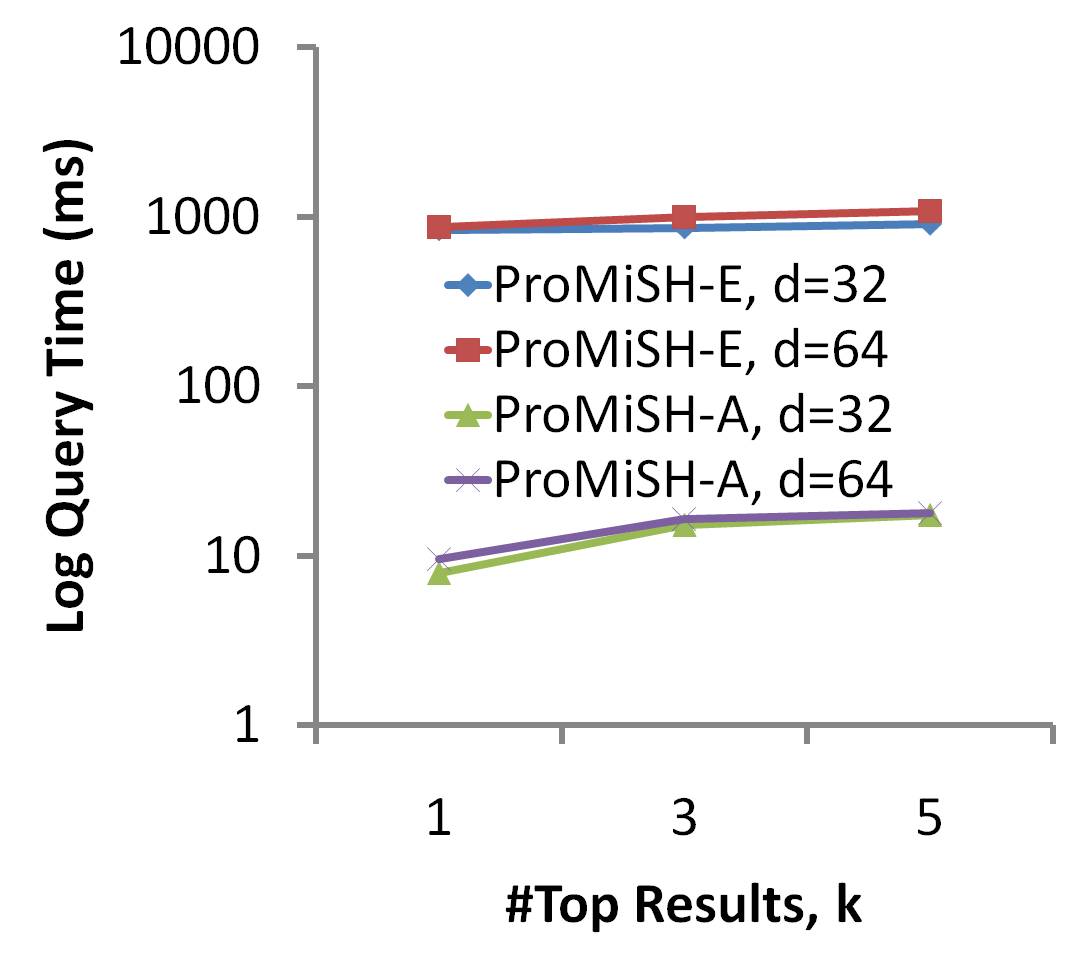}
\vspace{-4mm}
\caption{Query time analysis of ProMiSH algorithms for retrieving top-$k$ results for queries of size $q$=$4$ on real datasets of varying dimensions and size $N$=$1$ million.}
\vspace{-5mm}
\label{fig:rd_ea_var_k_t1}
\end{figure}

\begin{table*}[t]
\begin{center}
    \scriptsize
    \begin{tabular}{|c|c|c|c|c|c|c|c|c|c|c|c|c|}
     \hline
     & \multicolumn{4}{|c|}{ProMiSH-E} &\multicolumn{4}{|c|}{ProMiSH-A} & \multicolumn{4}{|c|}{Virtual bR*-Tree}\\
    \hline
     & \multicolumn{2}{|c|}{$N$=$10$ million} & \multicolumn{2}{|c|}{$N$=$100$ million}& \multicolumn{2}{|c|}{$N$=$10$ million}& \multicolumn{2}{|c|}{$N$=$100$ million} & \multicolumn{2}{|c|}{$N$=$10$ million}& \multicolumn{2}{|c|}{$N$=$100$ million}\\
    \hline
    $d$ &$U$=$100$& $U$=$1$,$000$& $U$=$100$& $U$=$1$,$000$&$U$=$100$& $U$=$1$,$000$&$U$=$100$& $U$=$1$,$000$& $U$=$100$& $U$=$1$,$000$& $U$=$100$& $U$=$1$,$000$\\
    \hline
    $8$ &2.8&3.0&2.8&2.8&0.7&0.9&0.7&0.7& 0.9&0.9 &0.9 &0.9 \\
    \hline
    $16$ &1.4&1.6&1.5&1.5&0.4&0.5&0.4&0.4& 0.5& 0.5& 0.6& 0.6\\
    \hline
    $32$ &0.7&0.8&0.8&0.8&0.2&0.2&0.2&0.2& 0.3&0.3 & 0.4& 0.4\\
    \hline
    $64$ &0.4&0.4&0.4&0.4&0.09&0.1&0.09&0.09&0.2 & 0.2&0.3&0.3\\
    \hline
    $128$&0.2&0.2&0.2&0.2&0.05&0.06&0.05&0.05&0.2 & 0.2&0.2 &0.2\\
    \hline
    \end{tabular}
    \end{center}
    \vspace*{0mm}
    \caption{Ratio of the index space to the dataset space for ProMiSH-E, ProMiSH-A and Virtual bR*-Tree for varying $N$, $d$, and $U$.}
    \vspace*{-5mm}
    \label{tab:spaceRatioE}
\end{table*}

The above results show that ProMiSH significantly outperforms state-of-the-art Virtual bR*-Tree on real datasets of all dimensions and sizes and on queries of all sizes. ProMiSH-E is five orders of magnitude faster than Virtual bR*-Tree for queries of size $q$=$5$ on a $16$-dimensional real dataset of size $70$,$000$.  ProMiSH-E is also at least four orders of magnitude faster than Virtual bR*-Tree for a queries of size  $q$=$4$ on a $32$-dimensional real dataset of size $50$,$000$. ProMiSH-A always has an order of magnitude better performance than ProMiSH-E. Similar to the observations on synthetic datasets, we find that the difference in query time of ProMiSH and Virtual bR*-Tree grows to multiple orders of magnitude with an increase in the dataset size $N$, the dataset dimension $d$, and the query size $q$. In addition, the query time performance of ProMiSH-A and ProMiSH-E is linear with the dataset size, the dataset dimension, and the query size, unlike Virtual bR*-Tree whose performance deteriorates sharply. This again confirms that the pruning criteria of Virtual bR*-Tree is ineffective for high-dimensional datasets.

We performed stress tests of ProMiSH on real datasets having $1$ million points of dimensions $32$ and $64$. Figure~\ref{fig:rd_ea_var_q} shows the query times of ProMiSH-A and ProMiSH-E for varying query sizes. Pro\-MiSH-A had a query time of $58$ ms and ProMiSH-E had a query time of $1$,$592$ ms for queries of size $q$=$8$ on $64$-dimensional real datasets. Figure~\ref{fig:rd_ea_var_k_t1} shows the query times of ProMiSH for retrieving top-$k$ results for queries of size $q$=$4$. ProMiSH-A had a query time of $18$ ms and ProMiSH-E had a query time of $1$,$084$ ms for top-$5$ results on $64$-dimensional datasets. Figures~\ref{fig:rd_ea_var_q} and \ref{fig:rd_ea_var_k_t1} verify that the query time of ProMiSH increases linearly  with $d$, $q$, and $k$.

Our evaluations on real datasets of high dimensions establish that ProMiSH  scales linearly with the dataset size, the dataset dimension, the query size, and the result size. ProMiSH also yields practical query times on large datasets of high dimensions, and is very useful for answering real time queries.

\subsection{Space Efficiency}
\label{sec:promishESC}
We evaluated the space efficiency of ProMiSH by computing the memory footprint of its index. For ProMiSH-E and ProMiSH-A, we used the space cost formulation from  section~\ref{sec:costAnal}. Here we first describe the space cost of Virtual-bR* Tree in terms of the dataset and the index parameters. Then, we give ratios of the index space to the dataset space for all the three algorithms for varying dataset parameters. Let the space cost of a point's identifier, a dimension of a point, and a keyword be $E$ bytes individually. Let $\mathcal{D}$ be a dataset having $N$ $d$-dimensional points each of which is tagged with $t$ keywords. Let $U$ be the number of unique keywords in $\mathcal{D}$. The dataset has a space cost of $S(\mathcal{D})$=(($d$+$t$)$\times$ $N$ $\times$ $E$) bytes.


Index of Virtual bR*-Tree comprises of a R*-Tree, an inverted index, and a  bR*-Tree. Let the number of children per node in R*-Tree be $x$. Let the total number of nodes in R*-Tree be $N_R$. The space cost of R*-Tree is $((2 \times d + x) \times E \times N_R)$ bytes. The inverted index stores a point's identifier and  its path from the root node in R*-Tree. Therefore, the space cost of the inverted index is $((log_x N + 1) \times t \times E \times N)$ bytes. For a query of size $q$, the space cost of bR*-Tree  is $((2 \times d \times E + 2 \times d \times E \times q + x \times E+ U/8) \times N_R)$ bytes.

We investigated ratios of the index space to the dataset space for all the three algorithms using their space cost formulations. We used following values of the parameters: $E$=$4$ bytes, $m$=$2$, $M$=$10$,$000$, $L$=$5$, $x$=$100$, $q$=$5$, and $t$=$1$. We show the ratios for varying values of $d$, $N$, and $U$ in table~\ref{tab:spaceRatioE}. For datasets of low dimensions, e.g., $d$=$8$, we observe from table~\ref{tab:spaceRatioE} that ProMiSH-E has the highest ratios, whereas ProMiSH-A has the lowest ratios. For datasets of high dimensions, e.g., $d$=$128$, we observe from table~\ref{tab:spaceRatioE} that ProMiSH-E and Virtual bR*-Tree have comparable ratios, whereas ProMiSH-A again has the lowest ratios.

We see that the index space of ProMiSH is independent of the dimension, whereas the dataset space grows linearly with it. Therefore, the space ratio of ProMiSH decreases with dimension. The index space of Virtual bR*-Tree also grows with dimension. Therefore, ProMiSH has a lower space ratio than Virtual bR*-Tree for high dimensions.




\section{Extending ProMiSH to Disk}
\label{disk}
Here we discuss extension of ProMiSH to disk. As seen from Algorithm~\ref{alg:ProMiSHE}, ProMiSH-E sequentially reads only required buckets from $\mathcal{I}_{kp}$ to find points containing at least one query keyword.  Therefore, $\mathcal{I}_{kp}$ is stored on disk using a directory-file structure. A directory is created for $\mathcal{I}_{kp}$. Each bucket of $\mathcal{I}_{kp}$ is stored in a separate file named after its key in the directory. We also see from Algorithm~\ref{alg:ProMiSHE} that ProMiSH sequentially probes $\mathcal{HI}$ data structures starting at the smallest scale to generate the candidate point ids for the subset search. Further, it reads only required buckets from the hashtable and the inverted index of a $\mathcal{HI}$ structure. Therefore, all the hashtables and the inverted indices of $\mathcal{HI}$ are again stored using a similar directory-file structure as $\mathcal{I}_{kp}$. All the points in the dataset are indexed into a B+-Tree~\cite{BTree} using their ids and stored on the disk. Subset search retrieves the points from the disk using B+-Tree for exploring the final set of results. 

\section{Conclusions and Future Work}
\label{conclusions}
In this paper, we proposed solutions for the problem of top-$k$ nearest keyword set search in multi-dimensional datasets. We developed  an exact (ProMiSH-E) and an approximate (ProMiSH-A) method. We designed a novel index based on random projections and hashing. Index is used to find subset of points containing the true results. We also proposed an efficient solution to query results from a subset of data points. Our empirical results show that ProMiSH is faster than state-of-the-art tree-based technique, having performance improvements of multiple orders of magnitude. These performance gains are further emphasized as dataset size and dimension increase, as well as for large query sizes. ProMiSH-A has the fastest query time. We empirically observed a linear scalability of ProMiSH with the dataset size, the dataset dimension, the query size, and the result size. We also observed that ProMiSH yield practical query times on large datasets of high dimensions for queries of large sizes.

In the future, we plan to explore other scoring schemes for ranking the result sets.  In one scheme, we may assign weights to the keywords of a point by using techniques like tf-idf. Then, each group of points can be scored  based both on the distance between the points and weights of the keywords. Further, the criteria of a result containing all the keywords can be relaxed to generate results having only a subset of the query keywords.  

\section{Acknowledgments}
This research was supported partially by the National Science Foundation under grant IIS-1219254.

\small

\bibliographystyle{IEEEtran}
\bibliography{keywordSetSearch}

\begin{thebibliography}{10}
\providecommand{\url}[1]{#1}
\csname url@samestyle\endcsname
\providecommand{\newblock}{\relax}
\providecommand{\bibinfo}[2]{#2}
\providecommand{\BIBentrySTDinterwordspacing}{\spaceskip=0pt\relax}
\providecommand{\BIBentryALTinterwordstretchfactor}{4}
\providecommand{\BIBentryALTinterwordspacing}{\spaceskip=\fontdimen2\font plus
\BIBentryALTinterwordstretchfactor\fontdimen3\font minus
  \fontdimen4\font\relax}
\providecommand{\BIBforeignlanguage}[2]{{%
\expandafter\ifx\csname l@#1\endcsname\relax
\typeout{** WARNING: IEEEtran.bst: No hyphenation pattern has been}%
\typeout{** loaded for the language `#1'. Using the pattern for}%
\typeout{** the default language instead.}%
\else
\language=\csname l@#1\endcsname
\fi
#2}}
\providecommand{\BIBdecl}{\relax}
\BIBdecl

\bibitem{MDSD:2008}
W.~Li and C.~X. Chen, ``Efficient data modeling and querying system for
  multi-dimensional spatial data,'' in \emph{GIS}, 2008, pp. 58:1--58:4.

\bibitem{DZhang2010}
D.~Zhang, B.~C. Ooi, and A.~K.~H. Tung, ``Locating mapped resources in web
  2.0,'' in \emph{ICDE}, 2010, pp. 521--532.

\bibitem{VsinghGeoClustering}
V.~Singh, S.~Venkatesha, and A.~K. Singh, ``Geo-clustering of images with
  missing geotags,'' in \emph{GRC}, 2010, pp. 420--425.

\bibitem{VsinghQuerySpat}
V.~Singh, A.~Bhattacharya, and A.~K. Singh, ``Querying spatial patterns,'' in
  \emph{EDBT}, 2010, pp. 418--429.

\bibitem{LipschitzEmbed}
J.~Bourgain, ``On lipschitz embedding of finite metric spaces in hilbert
  space,'' \emph{Israel J. Math.}, vol.~52, pp. 46--52, 1985.

\bibitem{HuaHai2006}
H.~He and A.~K. Singh, ``Graphrank: Statistical modeling and mining of
  significant subgraphs in the feature space,'' in \emph{ICDM}, 2006, pp.
  885--890.

\bibitem{DZhang2009}
D.~Zhang, Y.~M. Chee, A.~Mondal, A.~K.~H. Tung, and M.~Kitsuregawa, ``Keyword
  search in spatial databases: Towards searching by document,'' in \emph{ICDE},
  2009, pp. 688--699.

\bibitem{DatarLSH:2004}
M.~Datar, N.~Immorlica, P.~Indyk, and V.~S. Mirrokni, ``Locality-sensitive
  hashing scheme based on p-stable distributions,'' in \emph{SCG}, 2004.

\bibitem{Zhou05}
Y.~Zhou, X.~Xie, C.~Wang, Y.~Gong, and W.-Y. Ma, ``Hybrid index structures for
  location-based web search,'' in \emph{CIKM}, 2005.

\bibitem{sharadSKGIS07}
R.~Hariharan, B.~Hore, C.~Li, and S.~Mehrotra, ``Processing spatial-keyword
  ({SK}) queries in geographic information retrieval ({GIR}) systems,'' in
  \emph{SSDBM}, 2007.

\bibitem{Vaid05spatio}
S.~Vaid, C.~B. Jones, H.~Joho, and M.~Sanderson, ``Spatio-textual indexing for
  geographical search on the web,'' in \emph{SSTD}, 2005.

\bibitem{ChenLi:2010}
A.~Khodaei, C.~Shahabi, and C.~Li, ``Hybrid indexing and seamless ranking of
  spatial and textual features of web documents,'' in \emph{DEXA}, 2010, pp.
  450--466.

\bibitem{RTree84}
A.~Guttman, ``{R}-trees: A dynamic index structure for spatial searching,'' in
  \emph{ACM SIGMOD}, 1984, pp. 47--57.

\bibitem{FelipeSK08}
I.~De~Felipe, V.~Hristidis, and N.~Rishe, ``Keyword search on spatial
  databases,'' in \emph{ICDE}, 2008, pp. 656--665.

\bibitem{CongSKRank09}
G.~Cong, C.~S. Jensen, and D.~Wu, ``Efficient retrieval of the top-k most
  relevant spatial web objects,'' \emph{PVLDB}, vol.~2, pp. 337--348, 2009.

\bibitem{Martins:2005}
B.~Martins, M.~J. Silva, and L.~Andrade, ``Indexing and ranking in geo-ir
  systems,'' in \emph{workshop on GIR}, 2005, pp. 31--34.

\bibitem{CaoSigmod2011}
X.~Cao, G.~Cong, C.~S. Jensen, and B.~C. Ooi, ``Collective spatial keyword
  querying,'' in \emph{SIGMOD}, 2011, pp. 373--384.

\bibitem{AggNKS2010}
Z.~Li, H.~Xu, Y.~Lu, and A.~Qian, ``Aggregate nearest keyword search in spatial
  databases,'' in \emph{Asia-Pacific Web Conference}, 2010.

\bibitem{topkPrefQueries}
M.~L. Yiu, X.~Dai, N.~Mamoulis, and M.~Vaitis, ``Top-k spatial preference
  queries,'' in \emph{ICDE}, 2007, pp. 1076--1085.

\bibitem{XiaTopTInfluential:2005}
T.~Xia, D.~Zhang, E.~Kanoulas, and Y.~Du, ``On computing top-t most influential
  spatial sites,'' in \emph{VLDB}, 2005, pp. 946--957.

\bibitem{ZhangOptimalLocation:2005}
Y.~Du, D.~Zhang, and T.~Xia, ``The optimal-location query,'' in \emph{SSTD},
  2005, pp. 163--180.

\bibitem{ZhangOptimalLocation:2008}
D.~Zhang, Y.~Du, T.~Xia, and Y.~Tao, ``Progressive computation of the min-dist
  optimal-location query,'' in \emph{VLDB}, 2006, pp. 643--654.

\bibitem{R*Tree:1990}
N.~Beckmann, H.-P. Kriegel, R.~Schneider, and B.~Seeger, ``The {R}*-tree: An
  efficient and robust access method for points and rectangles,'' in
  \emph{SIGMOD}, 1990, pp. 322--331.

\bibitem{Apriori}
R.~Agrawal and R.~Srikant, ``Fast algorithms for mining association rules in
  large databases,'' in \emph{VLDB}, 1994, pp. 487--499.

\bibitem{CiacciaMTRee:1997}
P.~Ciaccia, M.~Patella, and P.~Zezula, ``{M}-tree: An efficient access method
  for similarity search in metric spaces,'' in \emph{VLDB}, 1997.

\bibitem{WeberVAFile:1998}
R.~Weber, H.-J. Schek, and S.~Blott, ``A quantitative analysis and perfomance
  study for similarity-search methods in high-dimensional spaces,'' in
  \emph{VLDB}, 1998, pp. 194--205.

\bibitem{iDistance}
H.~V. Jagadish, B.~C. Ooi, K.-L. Tan, C.~Yu, and R.~Zhang, ``idistance: An
  adaptive {B}+-tree based indexing method for nearest neighbor search,''
  \emph{ACM TDS}, vol.~30, no.~2, pp. 364--397, 2005.

\bibitem{Lindenstrausslemma}
W.~Johnson and J.~Lindenstrauss, \emph{Extensions of Lipschitz mappings into a
  Hilbert space. Contemporary Mathematics}, 1984.

\bibitem{Kleinberg:1997}
J.~M. Kleinberg, ``Two algorithms for nearest-neighbor search in high
  dimensions,'' in \emph{STOC}, 1997, pp. 599--608.

\bibitem{GionisLSH:1999}
A.~Gionis, P.~Indyk, and R.~Motwani, ``Similarity search in high dimensions via
  hashing,'' in \emph{VLDB}, 1999, pp. 518--529.

\bibitem{vsingh2012}
V.~Singh and A.~K. Singh, ``Simp: accurate and efficient near neighbor search
  in high dimensional spaces,'' in \emph{EDBT}, 2012, pp. 492--503.

\bibitem{keYi2009}
Y.~Tao, K.~Yi, C.~Sheng, and P.~Kalnis, ``Quality and efficiency in high
  dimensional nearest neighbor search,'' in \emph{SIGMOD}, 2009.

\bibitem{spatialJoin2}
H.-H. Park, G.-H. Cha, and C.-W. Chung, ``Multi-way spatial joins using
  r-trees: Methodology and performance evaluation,'' in \emph{SASD}, 1999.

\bibitem{MWSJPapadias:1999}
D.~Papadias, N.~Mamoulis, and Y.~Theodoridis, ``Processing and optimization of
  multiway spatial joins using r-trees,'' in \emph{PODS}, 1999.

\bibitem{joinNPComplete}
T.~Ibaraki and T.~Kameda, ``On the optimal nesting order for computing
  n-relational joins,'' \emph{ACM Trans. Database Syst.}, vol.~9, 1984.

\bibitem{BTree}
D.~Comer, ``The ubiquitous b-tree,'' \emph{ACM Computing Surveys}, vol.~11,
  no.~2, pp. 121--137, 1979.

\end{thebibliography}
\end{document}